\documentclass[aps,pra,reprint,superscriptaddress]{revtex4-2}

\usepackage{adjustbox}
\usepackage{amsmath}
\usepackage{amssymb}
\usepackage{amsthm}
\usepackage{bbm}
\usepackage{color}
\usepackage[T1]{fontenc}
\usepackage{graphicx}
\usepackage{hyperref}
\usepackage[utf8]{inputenc}
\usepackage{mathtools}
\usepackage{natbib}
\usepackage{orcidlink}
\usepackage{physics}
\usepackage{times}
\usepackage{tikz}
\usepackage{xcolor}

\newtheorem{condition}{Condition}[section]
\newtheorem{theorem}{Theorem}[section]
\newtheorem{axiom}{Axiom}[section]

\newtheorem{corollary}{Corollary}[section]
\newtheorem{definition}{Definition}[section]
\newtheorem{example}{Example}[section]

\newtheorem{proposition}{Proposition}[section]
\newtheorem{remark}{Remark}[section]

\newcommand{\CZ}{\mathrm{CZ}}

\newcommand{\CCZ}[1]{\mathrm{C}^{#1}\mathrm{Z}}

\DeclareMathOperator{\diag}{diag}

\hypersetup{
    colorlinks=true, linkcolor=blue, citecolor=blue,
    filecolor=blue, urlcolor=blue, breaklinks=true
}

\newcommand{\qpqi}{
  QPQI Group,
  Universidade Estadual de Ponta Grossa,
  84030-900 Ponta Grossa, Paraná, Brazil
}

\newcommand{\demat}{
  Departamento de Matem\'{a}tica e Estat\'{i}stica,
  Universidade Estadual de Ponta Grossa,
  84030-900 Ponta Grossa, Paraná, Brazil
}

\newcommand{\orcidnathan}{\orcidlink{0009-0007-0666-2266}}
\newcommand{\orcidalison}{\orcidlink{0000-0003-3552-8780}}
\newcommand{\orcidgiuliano}{\orcidlink{0000-0002-0205-9597}}
\newcommand{\orcidfabiano}{\orcidlink{0000-0001-5383-6168}}

\begin{document}

\title{Encoding matroids into quantum states}

\author{Nathan Ferreira\orcidnathan}
\email{nnferreira98@gmail.com}
\affiliation{\qpqi}

\author{Alison A. Silva\orcidalison}
\email{alisonantunessilva@gmail.com}
\affiliation{\qpqi}

\author{Giuliano G. La Guardia\orcidgiuliano}
\email{gguardia@uepg.br}
\affiliation{\demat}

\author{Fabiano M. Andrade\orcidfabiano}
\email{fmandrade@uepg.br}
\affiliation{\qpqi}
\affiliation{\demat}

\date{\today}

\begin{abstract}
Efficient representations of multipartite quantum states play a fundamental role in quantum information theory, providing both conceptual insight and practical tools for characterizing entanglement. Motivated by the axiomatic framework for graph states \href{https://doi.org/10.1103/PhysRevA.85.062313}{[Phys. Rev. A \textbf{85}, 062313 (2012)]} and its subsequent extension to hypergraph states \href{https://doi.org/10.1103/PhysRevA.87.022311}{[Phys. Rev. A \textbf{87}, 022311 (2013)]}, we introduce an axiomatic construction of \emph{matroid states}, a new family of multipartite quantum states associated with matroids. Our constructions are based on a set of axioms analogous to those that define graph and hypergraph states, yielding a consistent quantum representation of arbitrary matroids. Two ways of constructing matroid states are proposed: the first is defined in terms of circuits, and the second in terms of independent sets. In both approaches, we establish the existence of universal global operators that satisfy desirable properties such as locality, symmetry, commutativity, and are associated with the combinatorial structure of matroids. Furthermore, we establish a hierarchy connecting graph, matroid, and hypergraph states within a unified framework. Additionally, we show how to obtain an arbitrary graph state by applying suitable families of matroid states, whose corresponding operators are the generators of the stabilizer subgroup of the graph state. These results identify matroid theory as a natural combinatorial language for the efficient description of multipartite quantum states and open new perspectives for the investigation of quantum entanglement and related combinatorial structures.
\end{abstract}

\maketitle

\section{Introduction}
\label{sec:intro}

The efficient description of multipartite quantum states remains one of the central challenges in quantum information theory \cite{Marconi:2026}. Although the Hilbert space associated with a multipartite quantum system grows exponentially with the number of constituents \cite{Nielsen:2010,vonNeumann:1955}, physically relevant families of states often possess an underlying mathematical structure that enables a significantly more compact representation. Such representations not only facilitate the characterization of multipartite entanglement but also establish deep connections between quantum information and areas of discrete mathematics, leading to new theoretical insights and computational techniques.

The idea of encoding quantum states in mathematical structures that are not Hilbert spaces seeks to capture the relational features of complex vector spaces associated with different quantum subsystems that comprise the subject of analysis. Among the most successful examples of this interplay are graph states, which associate multipartite quantum states with simple graphs \cite{Hein:2004,Aschauer:2005}. Besides their fundamental role as universal resources for measurement-based quantum computation (MBQC) \cite{Raussendorf:2001,Raussendorf:2003,Briegel:2009}, graph states provide an elegant combinatorial description of entanglement in which many physical properties can be inferred directly from graph-theoretical concepts. Seeking a unifying principle behind these constructions, Ionicioiu and Spiller  \cite{Ionicioiu:2012} introduced an axiomatic framework for encoding graphs into quantum states. Rather than defining graph states through a specific quantum circuit, their approach is based on three physically motivated axioms: separability, graph isomorphism, and the existence of a universal edge operator, from which several important families of multipartite states naturally emerge, including graph states, qudit graph states, Gaussian cluster states, projected entangled pair states, and quantum random networks. This framework established a general methodology for associating combinatorial structures with multipartite quantum systems.

This axiomatic construction was subsequently generalized from graphs to hypergraphs by Qu \emph{et al.} \cite{Qu:2013a}. By modifying the original axioms and introducing universal hyperedge operators acting on arbitrary subsets of vertices, they defined the family of hypergraph states and showed that these states considerably enlarge the class of multipartite entangled states described within the axiomatic formalism. Their work further clarified the relationship between graph states, hypergraph states, and stabilizer states, illustrating the versatility of the axiomatic approach.

Despite these developments and the fact that they constitute an important class of multipartite entangled states \cite{Hein:2006,Qu:2013b,Qu:2013c,Rossi:2013,Guhne:2014,Poderini:2026}, both graph and hypergraph states remain fundamentally based on incidence structures. However, many of the combinatorial concepts that naturally arise in graph theory and are frequently relevant in quantum information are more fundamentally described by the notion of independence rather than incidence. Indeed, concepts such as circuits, rank, bases, cuts, and cycle spaces admit a unified treatment within matroid theory \cite{Oxley:1992}. Introduced by Whitney \cite{Whitney:1935} as an abstraction of linear independence, matroids encompass a wide variety of combinatorial structures, including graphs, vector spaces, linear codes, and optimization problems.

In this work, motivated by these considerations, we propose an axiomatic framework for encoding matroids into multipartite quantum states. Our objective is not merely to extend the graph and hypergraph constructions to another combinatorial object, but rather to investigate whether the notion of matroid provides a more fundamental setting for describing multipartite quantum states while preserving the physical principles underlying the original axiomatic framework. To this end, we formulate a set of axioms analogous to those proposed by Ionicioiu and Spiller and derive the corresponding quantum representation associated with arbitrary matroids. More precisely, we construct matroid states from matroids in two ways: the first is defined in terms of circuits, and the second in terms of independent sets. Within this framework, we establish the existence of universal global operators that satisfy desirable properties such as locality, symmetry, and commutativity, and are associated with the combinatorial structure of matroids. We show that the resulting construction is mathematically consistent. We also demonstrate that every matroid state belongs to the class of hypergraph states, thereby placing graph, matroid, and hypergraph states within a single hierarchical framework. These results identify matroid theory as a natural combinatorial language for describing multipartite quantum states and provide a new connection between matroid theory and quantum information, opening avenues for investigating multipartite entanglement and related combinatorial quantum structures.

The paper is organized as follows. In Section~\ref{sec:preli}, we recall some concepts and results concerning matroid theory that will be utilized throughout this paper. In Section~\ref{sec:mscirc}, we propose a new method for generating a quantum state using minimal dependent sets (circuits) of a matroid. In Section~\ref{sec:msstab}, we show that a graph state can be constructed from $t+1$ matroid states, where $t$ is the number of incident vertices of the corresponding graph. Section~\ref{sec:mssets} presents constructions of matroid states by means of independent sets. In Section~\ref{sec:disc}, we discuss the results presented here, and in Section~\ref{sec:fr}, we draw the final comments.

\section{Matroid Preliminaries}
\label{sec:preli}

In his seminal paper introducing matroid theory \cite{Whitney:1935}, Whitney sought to capture properties common to vector spaces, graph theory, and algebraic dependence, among others.
In this section, we recall some definitions and results on matroid theory that will be necessary for the development of this work. For more details on matroids, we refer the reader to \cite{Oxley:1992}. 

\begin{definition}\label{matro}
A matroid $M$ is an ordered pair $(X, {\mathcal I})$ consisting of a
finite nonempty set $X$ and a collection ${\mathcal I}$ of subsets of $X$ 
satisfying the following conditions:
\begin{itemize}
\item [ ${\bf (I1)}$] \ $\emptyset \in {\mathcal I}$. 

\item [ ${ \bf (I2)}$] If $I \in {\mathcal I}$ and $I^{\prime} \subseteq I$, then
$I^{\prime} \in {\mathcal I}$. 

\item [ ${ \bf (I3)}$] If $I_1, \ I_2 \in {\mathcal I}$ and $ | I_{1}|< |I_{2}|$, then
there exists an element $e \in I_{2}\setminus I_{1}$ such that $I_{1} \cup \{
	e \} \in {\mathcal I}$, where $|.|$ denotes the cardinality of the set.
\end{itemize}
\end{definition}

If $M$ is the matroid $(X, {\mathcal I})$, written $M=(X, {\mathcal I})$, then $M$ is called \emph{matroid on} $X$. The members of ${\mathcal I}$ are called
\emph{independent sets} of $M$, and $X$ is the \emph{ground set} of
$M$. A subset of $X$ that is not in ${\mathcal I}$ is said to be a \emph{dependent set}. Minimal dependent sets, i.e., \emph{circuits}, are
dependent sets all of whose proper subsets are independent sets.
A subset of $X$ is called a {\em basis} (or {\em base}) of $M$ if it is a maximal independent set. The set of all bases of $M$ is denoted by ${\mathcal B}$. All bases of a matroid $M$ have the same cardinality; the \emph{rank} of a matroid $M$ is the cardinality of a basis of $M$.

An important and well-known class of matroids is the class of \emph{vector matroids} of a given matrix $A$ over some field ${\mathbb F}$. 
\begin{theorem}
Let $S$ be a set of column labels of a matrix $A_{m\times n}$ over a
field $\mathbb{F}$. Let ${\mathcal I}$ be the collection of subsets $Y$ of
$S$ for which the multiset of columns labeled by $Y$ is linearly
independent (LI) in $V(m, \mathbb{F})$, the $m$-dimensional vector
space over ${\mathbb F}$. Then $(S, {\mathcal I})$ is a matroid.
\end{theorem}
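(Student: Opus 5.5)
We verify the three axioms of Definition~\ref{matro} for the collection $\mathcal I$. Throughout, a subset $Y\subseteq S$ is identified with the multiset of columns $\{a_y : y\in Y\}$ of $A$. Here $a_y\in V(m,\mathbb F)$ is the column labelled $y$, and repeated columns count as distinct entries. Linear independence of such a multiset means the only solution of $\sum_{y\in Y}\lambda_y a_y=0$ is $\lambda_y=0$ for all $y$. In particular, any two distinct labels carrying the same vector already give a dependent set.

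The first two axioms are immediate. For (I1), the empty multiset is vacuously linearly independent, since the empty sum is $0$ and there are no coefficients to be nonzero; hence $\emptyset\in\mathcal I$. For (I2), let $I\in\mathcal I$ and $I'\subseteq I$, and suppose $\sum_{y\in I'}\lambda_y a_y=0$. Extend the coefficients by $\lambda_y=0$ for $y\in I\setminus I'$. This gives a vanishing linear combination over $I$, so independence of $I$ forces all $\lambda_y=0$. Therefore $I'\in\mathcal I$.

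The substantive step is the exchange axiom (I3), which we prove by contradiction using dimension counting. Let $I_1,I_2\in\mathcal I$ with $|I_1|<|I_2|$. Let $W$ be the subspace of $V(m,\mathbb F)$ spanned by $\{a_y: y\in I_1\cup I_2\}$. Suppose that for every $e\in I_2\setminus I_1$ the set $I_1\cup\{e\}$ is dependent. We use the standard fact that if the vectors indexed by $I_1$ are independent and $a_e$ is adjoined with $e\notin I_1$, then the result is dependent if and only if $a_e\in\operatorname{span}\{a_y:y\in I_1\}$. Note that the case $a_e=a_y$ for some $y\in I_1$ is covered by this, since then $a_e$ lies in the span. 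Hence every $a_e$ with $e\in I_2\setminus I_1$ lies in $U=\operatorname{span}\{a_y:y\in I_1\}$, and trivially so do the $a_e$ with $e\in I_2\cap I_1$. It follows that $W=U$, so
\[
\dim W=\dim U=|I_1|.
\]
On the other hand, $I_2$ is an independent family of $|I_2|$ vectors in $W$, so $\dim W\ge |I_2|>|I_1|$. This is a contradiction, so some $e\in I_2\setminus I_1$ satisfies $I_1\cup\{e\}\in\mathcal I$.

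The only point needing care is the standard fact used above. If $\lambda_e a_e+\sum_{y\in I_1}\lambda_y a_y=0$ is a nontrivial relation, then $\lambda_e\neq 0$, because otherwise we would have a nontrivial relation on the independent set $I_1$. Dividing by $\lambda_e$ expresses $a_e$ as a combination of the $a_y$, so $a_e\in U$. The converse direction is clear. The remaining ingredients, namely that a spanning set of size $k$ gives dimension at most $k$ and an independent set of size $k$ gives dimension at least $k$, are basic linear algebra (the Steinitz exchange lemma) and are assumed.
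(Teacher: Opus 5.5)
The paper gives no proof of this theorem. It is recalled in the preliminaries as a standard fact from Oxley's \emph{Matroid Theory}, and your argument is correct and is exactly the standard textbook proof there (Oxley, Proposition~1.1.1). In that proof, (I1) and (I2) are immediate. (I3) follows because if every $I_1\cup\{e\}$ with $e\in I_2\setminus I_1$ were dependent, then $|I_2|\le\dim\operatorname{span}\{a_y : y\in I_1\cup I_2\}\le |I_1|$, a contradiction.

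The only thing you leave implicit is that $S$ is finite and nonempty, as Definition~\ref{matro} requires. This holds because $S$ labels the $n\ge 1$ columns of $A$.
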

\begin{proposition}\label{propcircuits}
    The set $\mathcal{C}$ of circuits of a matroid $M$ satisfies the following properties:
\begin{itemize}
        \item [ {\bf (C1)}] $\emptyset \notin {\mathcal C}$.
        \item [ { \bf (C2)}] If $C_1, C_2 \in {\mathcal C}$ and $C_1\subseteq C_2$ then $C_1=C_2$.
        \item[ { \bf (C3)}] If $C_1$ and $C_2$ are distinct members of ${\mathcal C}$ and $e \in C_1 \cap C_2$, then there exists a member $C_3$ of ${\mathcal C}$ such that $C_3 \subseteq (C_1 \cup C_2) \setminus \{e\}$.
\end{itemize}    
\end{proposition}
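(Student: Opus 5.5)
We derive each of the three properties from the independence axioms (I1)--(I3) of Definition~\ref{matro}. Recall that a circuit is a dependent set all of whose proper subsets are independent.

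For (C1), we use (I1): the empty set is independent, so it cannot be dependent and hence is not a circuit.

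For (C2), suppose $C_1\subseteq C_2$ are circuits. If the inclusion were proper, then $C_1$ would be a proper subset of $C_2$ and therefore independent. That contradicts $C_1$ being dependent, so $C_1=C_2$.

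The substantive part is (C3), and the plan is a proof by contradiction. Let $C_1\neq C_2$ be circuits with $e\in C_1\cap C_2$, and assume $D:=(C_1\cup C_2)\setminus\{e\}$ contains no circuit.

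First we observe that $D$ is then independent. Every dependent set contains a minimal dependent subset, which exists because $X$ is finite. That subset is a circuit, since all its proper subsets are independent by minimality.

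Next, by (C2) and $C_1\neq C_2$, neither circuit contains the other. Hence there is some $f\in C_2\setminus C_1$. Since $C_2\setminus\{f\}$ is a proper subset of $C_2$, it is independent.

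Now we extend $C_2\setminus\{f\}$ to a maximal independent subset $I$ of $C_1\cup C_2$. This is done by applying (I3) repeatedly against any larger independent subset of $C_1\cup C_2$; equivalently, we take an independent subset of $C_1\cup C_2$ containing $C_2\setminus\{f\}$ that has maximum size. By the exchange axiom (I3), every maximal independent subset of $C_1\cup C_2$ has the same size as $D$, because any smaller one could be augmented from $D$.

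We then bound $|I|$ from above:
\begin{itemize}
\item $f\notin I$, since otherwise $C_2\subseteq I$, which is impossible because $C_2$ is dependent.
\item Some $g\in C_1$ is also missing from $I$, since otherwise $C_1\subseteq I$.
\item These two missing elements are distinct, because $f\notin C_1$ while $g\in C_1$.
\end{itemize}
Therefore $|I|\le |C_1\cup C_2|-2 < |C_1\cup C_2|-1=|D|$.

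Finally, (I3) applied to $I$ and $D$ gives some $x\in D\setminus I$ with $I\cup\{x\}$ independent and contained in $C_1\cup C_2$. This contradicts the maximality of $I$. Hence $D$ contains a circuit $C_3$, which proves (C3).

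The main obstacle is this last step. It requires two things together: the equicardinality of maximal independent subsets within $C_1\cup C_2$, which follows from (I3), and finding two distinct elements of $C_1\cup C_2$ that are missing from $I$. Choosing $f\in C_2\setminus C_1$ is what makes the two missing elements distinct.
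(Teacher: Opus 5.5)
Your proof is correct. The paper does not prove this proposition; it quotes it as a standard fact from matroid theory and defers to Oxley's book. Your argument for (C3) is essentially the standard textbook proof: assume $D=(C_1\cup C_2)\setminus\{e\}$ is independent, extend $C_2\setminus\{f\}$ (with $f\in C_2\setminus C_1$) to a maximal independent subset $I$ of $C_1\cup C_2$, show $|I|\le|C_1\cup C_2|-2<|D|$, and augment $I$ from $D$ via (I3). The separate claim that all maximal independent subsets of $C_1\cup C_2$ have the size of $D$ is redundant: you only need $|I|\ge|D|$, and your final application of (I3) to $I$ and $D$ is exactly the proof of that inequality.
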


A matroid can also be characterized by its set of circuits, according to the following result.

\begin{theorem}\label{caracircuits}
    Let $X$ be a nonempty set and $\mathcal{C}$ be a collection of subsets of $X$ that satisfy $\bf(C1)$, $\bf(C2)$, and $\bf(C3)$. Let ${\mathcal I}$ be the collection of subsets of $X$ that do not contain members of $\mathcal{C}$. Then $(X, {\mathcal I})$ is a matroid that has ${\mathcal C}$ as its collection of circuits.
\end{theorem}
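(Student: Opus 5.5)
We verify the independence axioms (I1)–(I3) for $\mathcal I$, the family of subsets of $X$ that contain no member of $\mathcal C$. Then we check that the circuits of $(X,\mathcal I)$ are exactly the members of $\mathcal C$.

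The first two axioms are immediate. (I1) holds because $\emptyset$ contains only the empty set as a subset, and $\emptyset\notin\mathcal C$ by (C1). (I2) holds because a subset of a set containing no member of $\mathcal C$ also contains no member of $\mathcal C$.

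For the circuits, a set $D$ is dependent if and only if it contains some $C\in\mathcal C$.
\begin{itemize}
\item Take $C\in\mathcal C$. It is dependent. If a proper subset $C'\subsetneq C$ were dependent, it would contain some $C''\in\mathcal C$, and (C2) would force $C''=C$, which contradicts $C''\subseteq C'\subsetneq C$. So $C$ is minimal dependent.
\item Take a minimal dependent $D$. It contains some $C\in\mathcal C$, and $C$ is itself dependent, so minimality gives $D=C$.
\end{itemize}
Hence the circuit family is exactly $\mathcal C$.

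The main obstacle is (I3). We first derive an auxiliary claim from (C3).

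\emph{Claim:} if $I\in\mathcal I$, $e\notin I$, and $I\cup\{e\}\notin\mathcal I$, then there is a unique $C\in\mathcal C$ with $C\subseteq I\cup\{e\}$, and this $C$ contains $e$.

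Any circuit inside $I\cup\{e\}$ must contain $e$, since $I$ is independent. Suppose $C_1\neq C_2$ are two such circuits. Applying (C3) at $e$ yields $C_3\subseteq (C_1\cup C_2)\setminus\{e\}\subseteq I$, a contradiction.

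Now we prove (I3) by contradiction. Choose a counterexample $I_1,I_2\in\mathcal I$ with $|I_1|<|I_2|$, no $e\in I_2\setminus I_1$ with $I_1\cup\{e\}\in\mathcal I$, and $|I_1\setminus I_2|$ minimal. Since $|I_1|<|I_2|$, the set $I_2\setminus I_1$ is nonempty. If $I_1\setminus I_2=\emptyset$, then $I_1\subsetneq I_2$, and any $e\in I_2\setminus I_1$ gives $I_1\cup\{e\}\subseteq I_2$, which is independent by (I2). So we may pick $f\in I_1\setminus I_2$.

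For each $e\in I_2\setminus I_1$, let $C_e$ be the unique circuit in $I_1\cup\{e\}$ from the Claim.
\begin{itemize}
\item If some $C_e$ avoids $f$, then $C_e\subseteq (I_1\setminus\{f\})\cup\{e\}$. Any circuit inside $(I_1\setminus\{f\}\cup\{e\})\cup\{g\}$ can then be compared with $C_e$ via (C3).
\item Otherwise every $C_e$ contains $f$. Set $I_1'=(I_1\setminus\{f\})\cup\{e\}$ for a suitable $e$, so that $|I_1'\setminus I_2|$ drops by one.
\end{itemize}
The standard route (Oxley, Lemma 1.1.3) is exactly this: apply the minimality hypothesis to $I_1'$ and $I_2$ to get $g\in I_2\setminus I_1'$ with $I_1'\cup\{g\}\in\mathcal I$. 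Then show that $I_1\cup\{g\}$, or $I_1\cup\{e\}$, must be independent by eliminating $f$ between the circuits $C_g$ and $C_e$ with (C3). The eliminated set lands inside $I_1'\cup\{g\}$, contradicting its independence.

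The delicate part is bookkeeping which circuits contain $f$, $e$, $g$. If the direct minimal-counterexample argument gets tangled, the fallback is to first prove the stronger elimination axiom from (C3) by induction on $|C_1\cup C_2|$, and then run the argument above.
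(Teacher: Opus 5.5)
The paper gives no proof of this theorem; it is quoted from Oxley as background. So I am judging your argument on its own. Your checks of (I1) and (I2), the identification of the circuits of $(X,\mathcal I)$ with $\mathcal C$, and the uniqueness Claim are all correct, and the elimination in your ``otherwise'' branch is the right move.

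The gap is the case split for (I3). Your first bullet (``some $C_e$ avoids $f$'') contains no actual argument, and it cannot be dismissed. If $f\notin C_e$, then $(I_1\setminus\{f\})\cup\{e\}\supseteq C_e$ is dependent, so the minimality hypothesis produces no $g$, and $f$ is not available as an element to eliminate via (C3). You fixed $f\in I_1\setminus I_2$ arbitrarily before looking at the circuits, so nothing prevents every $C_e$ from avoiding $f$, and then neither branch applies. This missing branch is exactly where the independence of $I_2$ has to do real work. If $I_2$ were allowed to be dependent (for instance, made of elements $x$ with $\{x\}\in\mathcal C$), every $C_e=\{e\}$ would avoid $f$ and the exchange property would genuinely fail.

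The repair is to choose $e$ first and $f$ second.
\begin{itemize}
\item Take any $e\in I_2\setminus I_1$. Since $I_2\in\mathcal I$, we have $C_e\not\subseteq I_2$. As $C_e\subseteq I_1\cup\{e\}$ and $e\in I_2$, there is some $f\in C_e\cap(I_1\setminus I_2)$. This also makes your separate treatment of $I_1\setminus I_2=\emptyset$ unnecessary.
\item The set $I_1'=(I_1\setminus\{f\})\cup\{e\}$ is independent, because by your Claim $C_e$ is the only member of $\mathcal C$ inside $I_1\cup\{e\}$, and it contains $f$. Moreover $|I_1'|=|I_1|$ and $|I_1'\setminus I_2|=|I_1\setminus I_2|-1$.
\item Minimality gives $g\in I_2\setminus I_1'=(I_2\setminus I_1)\setminus\{e\}$ with $I_1'\cup\{g\}\in\mathcal I$.
\item Then $f\in C_g$, since otherwise $C_g\subseteq(I_1\setminus\{f\})\cup\{g\}\subseteq I_1'\cup\{g\}$.
\item Also $C_e\neq C_g$, because $e\in C_e$ while $e\notin I_1\cup\{g\}$.
\item Applying (C3) to $C_e$ and $C_g$ at $f$ yields a member of $\mathcal C$ inside $(C_e\cup C_g)\setminus\{f\}\subseteq(I_1\setminus\{f\})\cup\{e,g\}=I_1'\cup\{g\}$. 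This contradicts the independence of $I_1'\cup\{g\}$.
\end{itemize}
With this ordering, your minimal-counterexample argument is complete using only the weak elimination axiom (C3), so the strong-elimination fallback is not needed.
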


The following result shows how to obtain a matroid $M_G$ derived from a graph $G$, if the ground set is the set of edges of $G$. This matroid $M_G$ is called \emph{cycle matroid} of $G$.

\begin{theorem}\label{cyclemat}
    If $E$ is the set of edges of a graph $G$, and ${\mathcal C}$ is the set of edges of cycles of $G$, then ${\mathcal C}$ is the set of circuits of a matroid on $E$.
\end{theorem}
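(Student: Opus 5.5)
The plan is to apply Theorem~\ref{caracircuits}. Let $\mathcal{C}$ be the collection of edge sets of cycles of $G$. It suffices to check that $\mathcal{C}$ satisfies $\bf(C1)$, $\bf(C2)$ and $\bf(C3)$; Theorem~\ref{caracircuits} then produces a matroid on $E$ whose circuits are exactly $\mathcal{C}$. We use the standard convention that loops are cycles of length one and a pair of parallel edges is a cycle of length two. A cycle is a closed walk with at least one edge that repeats no vertex other than the start/end vertex, so its edge set is nonempty and every vertex it meets has degree exactly $2$ in it.

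\textbf{(C1).} Every cycle has at least one edge, so $\emptyset\notin\mathcal{C}$.

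\textbf{(C2).} Suppose $C_1\subseteq C_2$ are edge sets of cycles. In the subgraph $C_2$, every vertex has degree $2$. Removing an edge $f$ of $C_2$ leaves a path, which contains no cycle. So if $C_1\neq C_2$, then $C_1$ misses some $f\in C_2$ and lies in the acyclic graph $C_2\setminus\{f\}$. This is a contradiction, so $C_1=C_2$.

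\textbf{(C3).} This is the only substantive step. Let $C_1\neq C_2$ be cycles and $e=uv\in C_1\cap C_2$.
\begin{itemize}
\item $C_1\setminus\{e\}$ is a $u$--$v$ path $P_1$, and $C_2\setminus\{e\}$ is a $u$--$v$ path $P_2$.
\item Since $C_1\neq C_2$, (C2) gives $C_1\not\subseteq C_2$. Hence $P_1\neq P_2$.
\item The graph $H=(C_1\cup C_2)\setminus\{e\}$ contains the two distinct $u$--$v$ paths $P_1$ and $P_2$, so it contains a cycle.
\end{itemize}
To see the last claim concretely, walk along $P_1$ from $u$ until the first edge not in $P_2$. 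That edge starts at a vertex $x$ lying on $P_2$. Continue along $P_1$ until first returning to a vertex $y$ of $P_2$; this must happen, since $v\in P_2$. The subpath of $P_1$ from $x$ to $y$, together with the segment of $P_2$ between $y$ and $x$, forms a cycle. All its edges lie in $H$, so it avoids $e$. Equivalently, a forest has a unique path between any two vertices, so $H$ cannot be a forest.

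This cycle is the required $C_3\subseteq (C_1\cup C_2)\setminus\{e\}$. The argument covers parallel edges and loops: if $e$ is a loop, then $C_1=C_2=\{e\}$, so the hypothesis $C_1\neq C_2$ cannot hold and there is nothing to prove.

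With (C1)--(C3) verified, Theorem~\ref{caracircuits} gives the cycle matroid $M_G$ on $E$ with circuit set $\mathcal{C}$.
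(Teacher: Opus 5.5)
Your proof is correct. It is essentially the standard textbook argument (Oxley, Proposition~1.1.7): verify (C1)--(C3) and invoke Theorem~\ref{caracircuits}, with (C3) coming from the two distinct $u$--$v$ paths $C_1\setminus\{e\}$ and $C_2\setminus\{e\}$. The paper states this theorem only as background from \cite{Oxley:1992} and gives no proof of its own. One small fix: your definition of a cycle should also require the edges to be distinct. Otherwise the closed walk $u\,e\,v\,e\,u$ qualifies as a cycle, and your claim that every vertex of a cycle has degree $2$ fails.
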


A matroid that is isomorphic to a cycle matroid is called \emph{graphic}, where \emph{isomorphism} of matroids is defined in the sequence.

\begin{definition}
\label{isomat}
    Let $M_1=(X_1 , {\mathcal I}_1 )$ and $M_2=(X_2 , {\mathcal I}_2 )$ be two matroids. One says that $M_1$ and $M_2$ are isomorphic, written $M_1 \cong M_2$, if there exists a bijection $P_{1,2}: X_1 \longrightarrow X_2$ that preserves independent sets, i.e., if for all $A \subseteq X_1$, $P_{1,2}(A)$ is independent in $M_2$ if and only if $A$ is independent in $M_1$.
\end{definition}

Definition~\ref{isomat} can be rewritten in terms of circuits.

\begin{definition}\label{isomatrew}
    Let $M_1=(X_1 , {\mathcal C}_1 )$ and $M_2=(X_2 , {\mathcal C}_2 )$ be two matroids. One says that $M_1$ and $M_2$ are isomorphic, written $M_1 \cong M_2$, if there exists a bijection $P_{1,2}: X_1 \longrightarrow X_2$ that preserves circuits, i.e., if for all $A \subseteq X_1$, $P_{1,2}(A)$ is a circuit in $M_2$ if and only if $A$ is a circuit in $M_1$.
\end{definition}

\section{Matroid States from Circuits}
\label{sec:mscirc}

In this section, we introduce the concept of \emph{matroid quantum states}, or \emph{matroid states} for short. As is well known, the theory of graph states captures the properties of a given graph and transfers them to the corresponding quantum state. In general, given the underlying graph, one assigns a unique operator to each edge, which, in several cases, is the controlled-$Z$ gate, $\CZ \equiv \diag(1,1,1,-1)$. The procedure continues iteratively, edge by edge, until the final state is achieved \cite{Ionicioiu:2012}. 

In this section, we denote a matroid $M= (X, {\mathcal C})$ in terms of its collection of circuits ${ \mathcal C}$. We utilize the fundamental structures of the circuits of $M$ to define its corresponding matroid state $\ket{M}$. In other words, we assign to each circuit $C$ of $M$ a unique operator $U_C$ that depends uniquely on $C$ (see Axiom~\ref{axiom3c}). 

As usual, the first concept to be considered here is the \emph{direct sum} of matroids.
Let $X_1$ and $X_2$ be disjoint sets, and $M_1 =(X_1,{ \mathcal C}_1)$ and $M_2 =(X_2,{ \mathcal C}_2)$ be two matroids defined in terms of their corresponding circuit sets. The direct sum $M_1 \oplus M_2$ is the matroid with ground set $X_1 \cup X_2$ and circuit set ${\mathcal C}_{M_1 \oplus M_2}= {\mathcal C}_1 \cup {\mathcal C}_2$. 
Because the collections of circuits ${\mathcal C}_1$ and ${\mathcal C}_2$ of $M_1$ and $M_2$, respectively, are disjoint, it makes sense to define the separability of matroid states by direct sum.

\begin{axiom}\label{axiom1c}
    \emph{Separability}. Let $M_1=(X_1 , {\mathcal C}_1 )$ and $M_2=(X_2 , {\mathcal C}_2 )$ be two matroids on disjoint sets $X_1$ and $X_2$. Then one has $\ket{M_1 \oplus M_2} = \ket{M_1} \otimes \ket{M_2}$.
\end{axiom}

\begin{definition}\label{empcircuit}
    Let $M_{n}^{\emptyset} =(X, {\mathcal C})$ be a matroid with the collection of circuits ${\mathcal C}= \emptyset$ and $|X|=n$. Then $M_{n}^{\emptyset}$ is called the empty matroid on $n$ elements.
\end{definition}

\begin{corollary}\label{coro1c}
    If $M_{n}^{\emptyset}=(X , \emptyset )$ is the empty matroid, where $|X|=n$, then 
    $\ket{M_{n}^{\emptyset}}= \ket{{\phi}_1} \otimes \ket{{\phi}_2} \otimes \ldots \otimes \ket{{\phi}_n}$ is a separable state.
\end{corollary}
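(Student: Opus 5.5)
The plan is to decompose the empty matroid on $n$ elements into a direct sum of $n$ empty matroids on single elements, and then apply Axiom~\ref{axiom1c} repeatedly.

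Write $X=\{x_1,\ldots,x_n\}$ and, for each $i$, let $M_1^{\emptyset,(i)}=(\{x_i\},\emptyset)$ be the empty matroid on the single element $x_i$. First I would check that each $(\{x_i\},\emptyset)$ is indeed a matroid. This follows from Theorem~\ref{caracircuits}, since the empty collection satisfies (C1)--(C3) vacuously. Next I would verify, by the definition of direct sum, the identity
\begin{equation*}
M_{n}^{\emptyset} = M_1^{\emptyset,(1)}\oplus M_1^{\emptyset,(2)}\oplus\cdots\oplus M_1^{\emptyset,(n)}.
\end{equation*}
Here the ground sets $\{x_i\}$ are pairwise disjoint with union $X$, and the circuit collection of the sum is $\bigcup_i \emptyset=\emptyset$, which is exactly the circuit set of $M_n^{\emptyset}$. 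Since the direct sum is defined binarily, I would make this precise by induction on $n$, writing $M_{k}^{\emptyset}$ on $\{x_1,\ldots,x_k\}$ as $M_{k-1}^{\emptyset}\oplus M_1^{\emptyset,(k)}$.

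Then Axiom~\ref{axiom1c}, applied at each inductive step to the disjoint ground sets $\{x_1,\ldots,x_{k-1}\}$ and $\{x_k\}$, gives
\begin{equation*}
\ket{M_k^{\emptyset}}=\ket{M_{k-1}^{\emptyset}}\otimes\ket{M_1^{\emptyset,(k)}}.
\end{equation*}
Unrolling this yields
\begin{equation*}
\ket{M_n^{\emptyset}}=\bigotimes_{i=1}^n \ket{M_1^{\emptyset,(i)}}.
\end{equation*}
Setting $\ket{\phi_i}:=\ket{M_1^{\emptyset,(i)}}$, which is a single-system state attached to element $x_i$, gives the claimed product form. A product of single-site vectors is separable by definition.

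The main subtlety is not computational but definitional. The axioms, so far, only determine $\ket{M}$ through Axiom~\ref{axiom1c}, so $\ket{\phi_i}$ is simply whatever state is assigned to the one-element empty matroid. No specific form such as $\ket{+}$ can be derived at this stage. I would therefore state only the existence of a product decomposition, and remark that all $\ket{\phi_i}$ coincide if one additionally assumes isomorphism invariance, since all one-element empty matroids are isomorphic.
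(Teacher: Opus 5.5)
Your proof is correct, but it takes a different route from the paper.

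\textbf{The paper's route.} The paper's proof is one sentence: operators are attached to circuits, $M_n^{\emptyset}$ has none, so no operator acts on the state. That argument lives in the constructive picture of Axiom~\ref{axiom3c} and Eq.~(\ref{empty}), where every matroid state is built by applying circuit operators to a reference state. It takes for granted that this reference state is already a product, which is essentially what the corollary asserts. Both the axiom and the equation it leans on also appear only after the corollary.

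\textbf{Your route.} You derive the product form from the separability axiom alone, by writing $M_n^{\emptyset}$ as an iterated direct sum of one-element empty matroids. This mirrors how Ionicioiu and Spiller obtain the analogous statement for the empty graph. It is also the natural derivation here, since Axiom~\ref{axiom1c} is the only axiom stated at that point. Your checks are all sound: $(\{x_i\},\emptyset)$ is a matroid by Theorem~\ref{caracircuits}, circuit sets add under $\oplus$, and the binary direct sum extends by induction.

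\textbf{What each buys.} The paper's version is shorter and directly identifies $\ket{M_n^{\emptyset}}$ with the starting state of the operator construction. Yours avoids the circularity and correctly shows that, at this stage, the factors $\ket{\phi_i}$ are unconstrained. Their equality is a separate fact, Proposition~\ref{prop1c}.

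\textbf{A small note on your side remark.} For the claim that all $\ket{\phi_i}$ coincide, the cleaner argument is Corollary~\ref{coro3c} applied to transpositions of $X$. These are automorphisms of $M_n^{\emptyset}$, so the state is permutation-symmetric, and a symmetric product state has all factors equal up to phase. Using isomorphisms between distinct one-element matroids instead would require first identifying their single-site Hilbert spaces.
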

\begin{proof}
    Since we associate an operator with each circuit, if there is no circuit, no operator acts on the quantum state.
\end{proof}

\begin{corollary}\label{coro2c}
    Given a matroid $M=(X, { \mathcal C})$ with $|X|=n$, we associate with each element $i \in X$ a Hilbert space ${\mathcal H}_i$. The total Hilbert space is given by ${\mathcal H} = \displaystyle \bigotimes_{i=1}^{n}{\mathcal H}_i$.
\end{corollary}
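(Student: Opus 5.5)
The plan is to derive Corollary~\ref{coro2c} from the separability axiom (Axiom~\ref{axiom1c}) together with Corollary~\ref{coro1c}, by reducing an arbitrary matroid to a direct sum of one-element matroids.

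\textbf{Step 1: one Hilbert space per element.} Consider the empty matroid $M_n^{\emptyset}$ on the ground set $X$ of $M$. It has no circuits, so it decomposes as a direct sum of $n$ one-element matroids $M^{\emptyset}_{\{i\}}=(\{i\},\emptyset)$, one for each $i\in X$. These summands have pairwise disjoint ground sets. By Axiom~\ref{axiom1c}, applied inductively on $n$, we obtain
\begin{equation*}
\ket{M_n^{\emptyset}}=\bigotimes_{i\in X}\ket{M^{\emptyset}_{\{i\}}}.
\end{equation*}
This agrees with Corollary~\ref{coro1c}, where $\ket{\phi_i}=\ket{M^{\emptyset}_{\{i\}}}$. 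Each element $i$ therefore carries its own state $\ket{\phi_i}$, which lives in a Hilbert space $\mathcal{H}_i$. Hence $\ket{M_n^{\emptyset}}\in\bigotimes_{i=1}^n\mathcal{H}_i$.

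\textbf{Step 2: the same space serves for every matroid on $X$.} A general matroid $M=(X,\mathcal{C})$ is obtained from the empty matroid on the same ground set by acting with the circuit operators $U_C$, one for each $C\in\mathcal{C}$. Each $U_C$ is an operator on the subsystems $\bigotimes_{i\in C}\mathcal{H}_i$, extended by the identity on the remaining factors. Consequently $\ket{M}$ lies in the same space $\mathcal{H}=\bigotimes_{i=1}^n\mathcal{H}_i$. The total Hilbert space depends only on the ground set $X$ and not on $\mathcal{C}$.

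\textbf{Main obstacle.} The circuit operators $U_C$ of Axiom~\ref{axiom3c} have not yet been formally introduced at this point in the paper. I would therefore state Step 2 conditionally: \emph{any} operator assigned to a circuit acts on the tensor factors indexed by that circuit. Under this reading, the claim is really a consistency requirement, namely that the state space is fixed by $X$ alone. The essential content is Step 1, and Step 2 only records that later operators do not enlarge the space.
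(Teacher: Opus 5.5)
Your proposal is correct and follows essentially the paper's route. The paper gives no proof of this corollary itself, but its proof of the parallel Corollary~\ref{coro2} rests on the same fact you use: the empty matroid is mapped to a product state (Corollary~\ref{coro1c}), so each element carries its own state $\ket{\phi_i}$ and hence its own factor $\mathcal{H}_i$. Your Step 1 simply makes that product form explicit, by splitting $M_n^{\emptyset}$ into one-element summands and applying Axiom~\ref{axiom1c}.

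Your Step 2 goes beyond the paper. The paper just takes the space to depend only on $X$ (as remarked after Proposition~\ref{prop1c}) and only then defines operators on it, so Axiom~\ref{axiom3c} already presupposes this common space. Reading Step 2 as a consistency check rather than a derivation, as you do, is therefore the right call.
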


\begin{axiom}\label{axiom2c}
    \emph{Matroid Isomorphism}. Let $M_1=(X_1 , {\mathcal C}_1 )$ and $M_2=(X_2 , {\mathcal C}_2 )$ be two isomorphic matroids. Then the corresponding density operators ${\rho}_1 = \ketbra{M_1}$ and ${\rho}_2 = \ketbra{M_2}$ satisfy
\begin{eqnarray}
    {\rho}_2 = D( P_{1,2}) {\rho}_1 [D( P_{1,2})]^{-1},
\end{eqnarray}
where $D( P_{1,2})$ is a matrix representation of the bijection $P_{1,2}$ mapping $M_1$ to $M_2$.
\end{axiom}

\begin{corollary}\label{coro3c}
    If $P_M$ is an automorphism of the matroid $M=(X, {\mathcal C})$, then $\left[ \rho , D(P_{M} )\right] =0$.
\end{corollary}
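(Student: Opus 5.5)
The plan is to read Corollary~\ref{coro3c} as the special case $M_1 = M_2 = M$ of Axiom~\ref{axiom2c}, and then turn the conjugation identity into a commutator. An automorphism $P_M$ of $M=(X,\mathcal{C})$ is a bijection $X\to X$ that preserves circuits in the sense of Definition~\ref{isomatrew}. It is therefore an isomorphism from $M$ to itself, and Axiom~\ref{axiom2c} applies with $M_1=M_2=M$ and $P_{1,2}=P_M$. The density operators are then $\rho_1=\rho_2=\rho=\ketbra{M}$, and the axiom gives
\begin{equation*}
\rho = D(P_M)\,\rho\,[D(P_M)]^{-1}.
\end{equation*}

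The second step is to multiply this identity on the right by $D(P_M)$. This gives $\rho\, D(P_M) = D(P_M)\,\rho$, which is exactly $\left[\rho, D(P_M)\right]=0$.

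The only point that needs care is what $D(P_M)$ is and why it is invertible. By Corollary~\ref{coro2c}, the Hilbert space is $\mathcal{H}=\bigotimes_{i\in X}\mathcal{H}_i$ with identical local spaces. I would take $D(P_M)$ to be the operator that permutes the tensor factors according to $P_M$. This is a permutation matrix on the computational product basis, hence unitary, so $[D(P_M)]^{-1}=D(P_M)^\dagger$ exists; this is already implicit in the statement of Axiom~\ref{axiom2c}. With that settled, the argument is a direct specialization of the axiom, so I do not expect a genuine obstacle.

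As an optional consistency check, one could confirm the result against a concrete circuit-based construction. The state would be $\ket{M}=\prod_{C\in\mathcal{C}}U_C\ket{\phi}^{\otimes n}$, with $U_C$ depending only on $C$ and symmetric under permutations of $C$. Conjugating by $D(P_M)$ sends $U_C$ to $U_{P_M(C)}$. Since $P_M$ permutes $\mathcal{C}$ and the $U_C$ commute, the product is unchanged. The uniform product state $\ket{\phi}^{\otimes n}$ is also invariant under the permutation. This check is not needed for the corollary itself, which follows from Axiom~\ref{axiom2c} alone.
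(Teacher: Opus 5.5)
Your proposal is correct and is the paper's argument: the paper says only that the corollary ``follows directly from Axiom~\ref{axiom2c},'' and specializing to $M_1=M_2=M$ with $P_{1,2}=P_M$ and then right-multiplying by $D(P_M)$ is exactly that step. One caveat concerns your optional check only: the claim that conjugating by $D(P_M)$ sends $U_C$ to $U_{P_M(C)}$ is a covariance assumption the paper never makes, since Remark~\ref{multioperator} and Example~\ref{distinctoperators} allow different operators on different circuits, so that check is not automatic; as you say, though, the corollary does not need it.
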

\begin{proof}
    Follows directly from Axiom~\ref{axiom2c}.
\end{proof}

\begin{proposition}\label{prop1c}
    Let $M=(X, {\mathcal C})$ be a matroid, where $|X|=n$.
    Then the corresponding matroid state $\ket{M}$ belongs to a Hilbert space ${ \mathcal H}$ of $n$ identical quantum systems ${\mathcal H}_1$, i.e., ${ \mathcal H}=  {\mathcal H}_1^{\otimes n}$, where ${\mathcal H}_1$ is a Hilbert space associated with a single element of the ground set $X$. Additionally, the empty matroid $M_{n}^{\emptyset}$ is assigned to $M_{n}^{\emptyset} \rightarrow \ket{M_{n}^{\emptyset}} ={\ket{\phi}}^{\otimes n}$, where $\ket{\phi} \in \mathcal {H}_1$.
\end{proposition}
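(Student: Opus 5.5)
The argument parallels the corresponding result of Ionicioiu and Spiller for graph states: the empty matroid is used as a probe, and separability is combined with isomorphism invariance to force all single-element spaces and states to be identical.

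\emph{Step 1: the empty matroid is a direct sum of one-element empty matroids.} Write the empty matroid on $n$ elements as
\begin{equation}
M_{n}^{\emptyset} = M_1^{\emptyset}\oplus M_1^{\emptyset}\oplus\cdots\oplus M_1^{\emptyset},
\end{equation}
where each summand is the empty matroid on one of the singletons $\{i\}$, $i\in X$. This holds because the circuit set of a direct sum is the union of the circuit sets, and the union of empty collections is empty. Axiom~\ref{axiom1c} then gives $\ket{M_{n}^{\emptyset}}=\ket{\phi_1}\otimes\cdots\otimes\ket{\phi_n}$, consistent with Corollary~\ref{coro1c}. Here $\ket{\phi_i}\in\mathcal{H}_i$ is the state assigned to the one-element empty matroid on $\{i\}$.

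\emph{Step 2: all single-element spaces and states coincide.} Any two one-element empty matroids on $\{i\}$ and $\{j\}$ are isomorphic via the bijection $i\mapsto j$, since neither has circuits. Axiom~\ref{axiom2c} then relates their density operators by $D(P_{i,j})$. This requires $\mathcal{H}_i$ and $\mathcal{H}_j$ to be isomorphic, so we may identify every $\mathcal{H}_i$ with a fixed $\mathcal{H}_1$. Once the spaces are identified, $D(P_{i,j})$ acts trivially on a single site, so $\ket{\phi_i}\bra{\phi_i}=\ket{\phi_j}\bra{\phi_j}$. The states therefore agree up to a global phase, which can be absorbed, and we write $\ket{\phi}$ for the common state.

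A cleaner way to get the same conclusion inside a single system is to use Corollary~\ref{coro3c}. Every permutation of $X$ is an automorphism of $M_n^{\emptyset}$. Hence $\rho=\ket{M_n^{\emptyset}}\bra{M_n^{\emptyset}}$ commutes with every permutation operator $D(P)$. For a product state, invariance under transposing sites $i$ and $j$ forces $\ket{\phi_i}\bra{\phi_i}=\ket{\phi_j}\bra{\phi_j}$; one sees this by taking partial traces over all other sites. This yields $\ket{M_n^{\emptyset}}=\ket{\phi}^{\otimes n}$.

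\emph{Step 3: extend to an arbitrary matroid.} For a general matroid $M=(X,\mathcal{C})$ with $|X|=n$, Corollary~\ref{coro2c} places $\ket{M}$ in $\bigotimes_i\mathcal{H}_i$. The per-element space attached to each $i$ does not depend on which circuits are present, since the circuit operators $U_C$ only act on this fixed space. Consequently the identification from Step 2 gives $\mathcal{H}=\mathcal{H}_1^{\otimes n}$.

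The main obstacle is justifying that isomorphic matroids live on isomorphic local spaces, so that the operator $D(P_{1,2})$ is meaningful as a map between them. This is implicit in Axiom~\ref{axiom2c}, and I would state it explicitly as the interpretation of that axiom, as Ionicioiu and Spiller do for graphs. I would also flag that the phase of $\ket{\phi}$ is fixed only up to convention.
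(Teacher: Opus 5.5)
Your proposal is correct and takes essentially the same route as the paper. The paper's proof just defers to Proposition~1 of Ionicioiu and Spiller, which argues as your ``cleaner'' variant does: separability makes $\ket{M_n^{\emptyset}}$ a product state, every permutation of $X$ is an automorphism of $M_n^{\emptyset}$, and Corollary~\ref{coro3c} then forces all local spaces and local states to coincide, with your one-element-isomorphism version and your Step~3 via Corollary~\ref{coro2c} at the same level of rigor as that reference.
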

\begin{proof}
    The proof is the same as that given in Proposition 1 of \cite{Ionicioiu:2012}.
\end{proof}

It is interesting to note that the dimension of ${\mathcal H}_1$ is arbitrary, i.e., it is a free parameter of the theory. Another consequence of Proposition~\ref{prop1c} is that all matroids whose ground set has $n$ elements are mapped to the same Hilbert space ${ \mathcal H}=  {\mathcal H}_1^{\otimes n}$. Hence, fixing the ground set $X$, given two matroids $M_1=(X, {\mathcal C}_1)$ and $M_2=(X, {\mathcal C}_2)$ with $|X|=n$, there exists a linear operator $U_{M_1 , M_2}$ on ${ \mathcal H}$ such that $\ket{M_2} = U_{M_1 , M_2} \ket{M_1}$. In particular, it follows that 
\begin{equation}\label{empty}
\ket{M} = U_{M_{n}^{\emptyset} , M}\ket{M_{n}^{\emptyset}}=U_{M_{n}^{\emptyset} , M}{\ket{\phi}}^{\otimes n}.
\end{equation}

Before proceeding further, let us recall the following axiom shown in \cite{Ionicioiu:2012}

\begin{axiom}\cite[Axiom A3]{Ionicioiu:2012}\label{axiomspiller}
    Universal edge operator. If the graphs $G=(V, E)$ and $G'=(V',E')$ differ by a single edge, i.e., $V=V'$ and $E'=E\cup \{x, y\}$, then $\ket{G'} = U_{\{x,y\}} \ket{G}$. The edge operator $U_{\{x,y\}}$ is independent of both $G$ and $G'$ and depends uniquely on the edge $\{x,y\}$.
\end{axiom}

In the case of a matroid, it can be necessary to add only one circuit or even a finite number of circuits due to Item ${ \bf (C3)}$ of Proposition~\ref{propcircuits}, as we will see in the next two examples. In Example~\ref{exe1}, it is necessary to add only one circuit to ${\mathcal C}_1$, and in Example~\ref{exe2}, it is necessary to add two circuits to ${\mathcal I}_1$ at the same step.

\begin{example}\label{exe1}
    Let $M_1 = (X, {\mathcal C}_1)$ be a matroid with $X=\{ a, b, c, d \}$, ${\mathcal C}_1 = \{ \{a, b\} \}$. Let $M_2 = (X, {\mathcal C}_2)$ be another matroid with ${\mathcal C}_2 = \{ \{a, b\}, \{c, d\}\}$. Then, there exists a unique operator $U_{\{c, d\}}$, corresponding to the circuit $\{c, d\}$ such that $\ket{M_2} = U_{\{b, c\}}\ket{M_1}$.
\end{example}

\begin{example}\label{exe2}
    Let $M_1 = (X, {\mathcal C}_1)$ be a matroid with $X=\{ a, b, c\}$ and ${\mathcal C}_1 = \{ \{a, b\} \}$; let $M_2 = (X, {\mathcal C}_2)$ be the matroid with ${\mathcal C}_2 = \{ \{a, b\}, \{b, c\}, \{a, c\} \}$. In this case, if we add only the circuit $\{b, c\}$ to ${\mathcal C}_1$, the resulting ordered pair is not a matroid, since Item ${ \bf (C3)}$ of Proposition~\ref{propcircuits} fails. The same is true if we add only the circuit $\{a, c\}$. Therefore, it is necessary to add both circuits $\{b, c\}$ and $\{a, c\}$ at the same step in ${\mathcal C}_1$ to ensure that the resulting ordered pair is a matroid, i.e., the matroid $M_2$. Then, for each circuit $\{b, c\}$ and $\{a, c\}$, we assign unique operators $U_{\{b, c\}}$ and $U_{\{a, c\}}$, respectively, at the same step.
\end{example}

\begin{axiom}\label{axiom3c}
    \emph{Universal Circuit Operator}.
    If two matroids $M_1=(X , {\mathcal C}_1 )$ and $M_2=(X , {\mathcal C}_2 )$ differ by a single circuit, i.e., ${\mathcal C}_2 = {\mathcal C}_1 \cup  \{C\}$, then $\ket{M_2} = U_{C} \ket{M_1}$. The circuit operator $U_{C}$ does not depend on either $M_1$ or $M_2$, and it is determined uniquely by the circuit $C$. If $M_2$ is obtained from $M_1$ by adding, at the same step, a minimal finite number of circuits $[C]=\{C_{i_1}, \ldots, C_{i_r}\}$ to satisfy Item~${\bf (C3)}$ of Proposition~\ref{propcircuits}, i.e., ${\mathcal C}_2 = {\mathcal C}_1 \cup  \{[C]\}$, then $\ket{M_2} = U_{[C]} \ket{M_1}$. For each circuit $C_{i_j}$ containing in $[C]$, $j=1, \ldots, r$, there exists an operator $U_{C_{i_j}}$ that does not depend on neither $M_1$ nor $M_2$ and is uniquely determined by $C_{i_j}$, for all $j=1, \ldots, r$. 
\end{axiom}

The procedure for constructing matroid states is similar to that of \cite{Ionicioiu:2012}. Given a matroid $M=(X, {\mathcal C})$, the construction of the matroid state $\ket{M}$ is obtained by starting from the empty matroid and successively applying Axiom~\ref{axiom3c}, thereby resulting in the matroid state.

\begin{eqnarray}\label{quantumstate}
\ket{M} = \displaystyle\prod_{C \in {\mathcal C}}^{} U_{[C]}{\ket{\phi}}^{\otimes n},
\end{eqnarray}
where $[C]$ can be a unique circuit or a finite number of circuits satisfying Item~${\bf (C3)}$ of Proposition~\ref{propcircuits}, according to Axiom~\ref{axiom3c}.

\begin{condition}\label{C1c}\emph{Locality}.
    Let $M=(X, {\mathcal C})$ be a matroid and $C = \{ b_1 , b_2 , \ldots , b_s \} \in {\mathcal C}$. Then the circuit operator $U_{C}$ acts nontrivially only on the Hilbert spaces corresponding to the elements belonging to $C$, i.e., on the space
    ${ \mathcal H}_{b_1} \otimes { \mathcal H}_{b_2} \otimes \ldots \otimes { \mathcal H}_{b_s}$ and acts as identity on the remaining space, i.e., $U_{C}=U_{\{b_1,\ldots,b_s\}}\otimes I^{\otimes (n-s)}$.
\end{condition}

\begin{condition}\label{C2c}\emph{Symmetry}. 
    The circuit operator is symmetric in the inputs, i.e., $U_{C}=U_{C^{*}}$, where $C^{*}$ is a reordering of the elements of $C$.
\end{condition}

\begin{condition}\label{C3c}\emph{Circuit Commutativity}.
    Let $M=(X, {\mathcal C})$ be a matroid and assume that two circuits $C_1$ and $C_2$ have common elements. Then the corresponding operators $U_{C_1}$ and $U_{C_2}$ commute.
\end{condition}

\begin{remark}\label{multioperator}
    In the literature (see, for example, \cite{Ionicioiu:2012,Qu:2013a}), the authors considered a unique operator that acts on two qubits of the graph state \cite{Ionicioiu:2012}, i.e., a triplet $G= ( {\mathcal H}_1, \ket{\psi},U)$. In the case of a hypergraph containing $n+1$ hyperedges, there exist $n+1$ basic operators $U_0, U_1,\ldots, U_n$ corresponding to each of the $n+1$ types of hyperedges \cite{Qu:2013a}. More precisely, if the hyperedge $e_i$ has $i$ elements, then $U_i$ acts on $e_i$. In this paper, we do not assume that the operators corresponding to the circuits are equal. In fact, we only assume that for the construction of a matroid state, we must satisfy Conditions~\ref{C1c}, \ref{C2c}, and \ref{C3c}. We will illustrate this fact in Example~\ref{distinctoperators}.
\end{remark}

The empty matroid is the matroid $M_{n}^{\emptyset}=(X,\emptyset)$, which means that $|X|=n$ and there is no circuit acting on the elements of the ground set. The corresponding quantum state is ${\ket{\psi}}^{\otimes n}$. 

Let $m$ and $n$ be two nonnegative integers with $m \leq n$. Let $X$ be a set with $n$ elements and ${ \mathcal B}$ be the collection of all $m$-element subsets of $X$. For direct computation, it follows that ${ \mathcal B}$ is the set of bases of a matroid on $X$, which is denoted by ${\mathcal U}_{m,n}$, and is called the \emph{uniform matroid} of rank $m$ on an $n$-element set. The set of independent sets ${\mathcal I}({\mathcal U}_{m,n})$ of ${\mathcal U}_{m,n}$ is
\begin{equation}\label{independentuniform}
    {\mathcal I}({\mathcal U}_{m,n}) = \{ A \subseteq X : \ |A| \leq m\}
\end{equation}
and its set of circuits is given by
\begin{equation}\label{circuituniform}
{ \mathcal C}({\mathcal U}_{m,n})= 
\begin{cases}
           \ \  \emptyset, & \ \ if \ \  m=n. \\
           \  \{ A \subseteq X : \ |A| = m +1\}, & \ \ if \ \  m < n.
         \end{cases}
\end{equation}

In what follows, we present an example illustrating Remark~\ref{multioperator}, in which the operators satisfy Conditions~\ref{C1}, \ref{C2}, and \ref{C3} but are different. The well-known Pauli operators are the unitary operators shown below
\begin{align}
\label{Paulis}
    I= {} &
    \begin{pmatrix*}[r]
     1 & 0 \\
     0 & 1\\
\end{pmatrix*}, &
    \sigma_x= {} &
    \begin{pmatrix*}[r]
     0 & 1 \\
     1 & 0\\
\end{pmatrix*}, \nonumber \\
\sigma_y= {} & 
\begin{pmatrix*}[r]
     0 & -i \\
     i & 0\\
\end{pmatrix*}, &
 \sigma_z= {} &  
 \begin{pmatrix*}[r]
     1 & 0 \\
     0 & -1\\
\end{pmatrix*}.
\end{align}

\begin{example}\label{distinctoperators}
Let $M=(E, { \mathcal C})$ be the uniform matroid ${\mathcal U}_{2,4}$, where $E=\{a, b, c, d\}$. We know that the set of circuits of \ ${\mathcal U}_{2,4}$ is\\ 
${ \mathcal C}({\mathcal U}_{2,4}) = \{ \{a, b, c\}, \{a, b, d\}, \{a, c, d\}, \{b, c, d\}\}$. Define: 
\begin{align*}
U_{\{a,b,c\}}= {} & {(\sigma_x\sigma_z)}_{a}\otimes{(\sigma_x\sigma_z)}_{b}\otimes{(\sigma_x\sigma_z)}_{c}\otimes I,\\ 
U_{\{a,b,d\}}= {} & {(\sigma_z\sigma_x)}_{a}\otimes{(\sigma_z\sigma_x)}_{b}\otimes I\otimes {(\sigma_z\sigma_x)}_{d},\\ 
U_{\{a,c,d\}}= {} & {(\sigma_x\sigma_z)}_{a}\otimes I \otimes{(\sigma_x\sigma_z)}_{c}\otimes {(\sigma_x\sigma_z)}_{d},\\ 
U_{\{b,c,d\}}= {} & I \otimes {(\sigma_z\sigma_x)}_{b}\otimes{(\sigma_z\sigma_x)}_{c}\otimes {(\sigma_z\sigma_x)}_{d}.
\end{align*}
Note that these operators satisfy Conditions~\ref{C1c}, \ref{C2c} and \ref{C3c} and $U_{\{a,b,c\}} \neq U_{\{a,b,d\}}$. This implies that, in our context, we can have distinct operators.
\end{example}

Proposition~\ref{completegraph} shown in the following establishes that the intersection between the set of graph states and the set of matroid states is not empty.

\begin{proposition}\label{completegraph}
    Let $n > 1$ be a positive integer. The complete graph state of $n$ qubits is the matroid state of the uniform matroid ${\mathcal U}_{1,n}$.
\end{proposition}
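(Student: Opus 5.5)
The plan is to compute the circuit set of ${\mathcal U}_{1,n}$ from Eq.~\eqref{circuituniform}, match it with the edge set of the complete graph $K_n$, and then check that the state produced by Eq.~\eqref{quantumstate} coincides with the standard complete graph state. Since $n>1$, we have $m=1<n$. By Eq.~\eqref{circuituniform}, ${\mathcal C}({\mathcal U}_{1,n})=\{A\subseteq X : |A|=2\}$, which is exactly the collection of all $\binom{n}{2}$ two-element subsets of $X$. Identifying the ground set $X$ with the vertex set $V$ of $K_n$, these circuits are in bijection with the edges $\{x,y\}$ of $K_n$.

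Next I construct $\ket{{\mathcal U}_{1,n}}$ through Axiom~\ref{axiom3c}. Starting from the empty matroid, the state is $\ket{M_n^{\emptyset}}=\ket{\phi}^{\otimes n}$ by Proposition~\ref{prop1c}. I take $\ket{\phi}=\ket{+}$ and assign to each circuit $C=\{x,y\}$ the operator $U_C=\CZ_{xy}\otimes I^{\otimes(n-2)}$. This choice satisfies the required conditions:
\begin{itemize}
\item Locality (Condition~\ref{C1c}) holds because $U_C$ acts nontrivially only on the two qubits of $C$.
\item Symmetry (Condition~\ref{C2c}) holds because $\CZ$ is symmetric in its two qubits.
\item Commutativity (Condition~\ref{C3c}) holds because all $U_C$ are diagonal in the computational basis.
\end{itemize}
Because all the $U_C$ commute, the product in Eq.~\eqref{quantumstate} does not depend on the order or grouping of the circuits. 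It therefore equals $\prod_{\{x,y\}\subseteq V}\CZ_{xy}\ket{+}^{\otimes n}$, which is by definition the graph state of $K_n$.

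The main obstacle is the stepwise construction demanded by Axiom~\ref{axiom3c}. Adding a single two-element circuit to a matroid whose circuits are edges yields a valid matroid only if the circuit sets along the way are unions of cliques. Property ${\bf (C3)}$ forces transitivity: circuits $\{a,b\}$ and $\{b,c\}$ force $\{a,c\}$, as in Example~\ref{exe2}. I would handle this by building the clique one vertex at a time. At step $k$, I add the bundle $[C]=\{\{v_i,v_k\}: i<k\}$ simultaneously. Each intermediate family is then the circuit set of ${\mathcal U}_{1,k}\oplus$ (loops-free free part), which is a valid matroid since it is a complete clique on $k$ elements together with free elements. Under Axiom~\ref{axiom3c}, $U_{[C]}$ is the product of the individual $U_{\{v_i,v_k\}}$, and these commute, so the total product is unchanged. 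Applying Axiom~\ref{axiom1c} to the free elements, which remain in state $\ket{+}$, confirms that the intermediate states are consistent.

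Finally, I would remark on uniqueness. Up to the free parameters $\ket{\phi}$ and $U_C$ that Remark~\ref{multioperator} allows, the identification is canonical. Indeed, the automorphism group of ${\mathcal U}_{1,n}$ is the full symmetric group $S_n$, matching that of $K_n$, which is consistent with Corollary~\ref{coro3c}.
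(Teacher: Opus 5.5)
Your proposal is correct and follows the same route as the paper. You identify ${\mathcal C}({\mathcal U}_{1,n})$ with all $2$-subsets of the ground set, assign $U_C=\CZ$ to each circuit, start from $\ket{+}^{\otimes n}$, and recognize the product as the graph state of $K_n$; your check of Conditions~\ref{C1c}--\ref{C3c} and your vertex-by-vertex construction (adding the bundle $\{\{v_i,v_k\}: i<k\}$ at step $k$, as forced by ${\bf (C3)}$) only make explicit what the paper's one-line proof leaves implicit.
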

\begin{proof}
    The collection of circuits of ${\mathcal U}_{1,n}$ consists of all sets with exactly two elements. For each circuit $C \in {\mathcal C}({\mathcal U}_{1,n})$, define the circuit operator $U_C = \CZ$, the controlled-$Z$ gate. We then have the graph state $G = ({\mathbb C}^{2},\ket{+},\CZ)$.
\end{proof}

\begin{proposition}\label{proptransitive}
     Let $M=(X, {\mathcal C})$ be a matroid where $|X| \geq 2$. If there exist two operators $U_{\{a, b\}}$ and $U_{\{b, c\}}$ that act nontrivially on the Hilbert spaces corresponding to $\{a, b\}$ and $\{b, c\}$ respectively, then there exists an operator $U_{\{a, c\}}$ that acts nontrivially on $\{a, c\}$.
\end{proposition}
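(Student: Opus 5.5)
The plan is to translate the existence of the two operators back into matroid structure and then use circuit elimination. By Axiom~\ref{axiom3c}, the matroid state is built by attaching a circuit operator $U_C$ to each circuit $C\in\mathcal{C}$. By Condition~\ref{C1c} (locality), $U_C$ acts nontrivially exactly on the factors indexed by the elements of $C$. So the hypothesis that $U_{\{a,b\}}$ and $U_{\{b,c\}}$ exist and act nontrivially on $\{a,b\}$ and $\{b,c\}$ will be read as saying that $\{a,b\}$ and $\{b,c\}$ are circuits of $M$. I will then show that $\{a,c\}$ must also be a circuit. The required operator $U_{\{a,c\}}$ is then the one that Axiom~\ref{axiom3c} assigns to that circuit, and Condition~\ref{C1c} makes it act nontrivially only on $\mathcal{H}_a\otimes\mathcal{H}_c$.

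The combinatorial core goes as follows. Assume $a\neq c$; if $a=c$, the statement is trivial or vacuous. Then $\{a,b\}$ and $\{b,c\}$ are distinct circuits sharing the element $b$. Applying (C3) of Proposition~\ref{propcircuits} with $e=b$ gives a circuit $C_3\subseteq(\{a,b\}\cup\{b,c\})\setminus\{b\}=\{a,c\}$. By (C1), $C_3\neq\emptyset$, so the only options are $C_3=\{a\}$, $C_3=\{c\}$, or $C_3=\{a,c\}$.

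The singleton cases are where I expect the main obstacle. If $\{a\}$ were a circuit (a loop), then $\{a\}\subsetneq\{a,b\}$ would contradict (C2), since $\{a,b\}$ is a circuit. Likewise $\{c\}\subsetneq\{b,c\}$ rules out $\{c\}$. Hence $C_3=\{a,c\}\in\mathcal{C}$. This is the standard fact that ``being parallel'' is transitive in a matroid. Example~\ref{exe2} illustrates it: adding $\{b,c\}$ to $\{\{a,b\}\}$ forces $\{a,c\}$ into the same step $[C]$.

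Finally, I will tie this back to the state. Because $\{a,c\}\in\mathcal{C}$, the product in Eq.~(\ref{quantumstate}) contains a factor $U_{[C]}$ that includes $U_{\{a,c\}}$. This operator is uniquely determined by $\{a,c\}$ (Axiom~\ref{axiom3c}) and acts as $U_{\{a,c\}}\otimes I^{\otimes(n-2)}$ by Condition~\ref{C1c}. The one delicate interpretive point is the first step, namely that ``acting nontrivially on $\{a,b\}$'' really comes from $\{a,b\}$ being a circuit. I will state this explicitly as the reading of the hypothesis, since within the framework operators are only attached to circuits.
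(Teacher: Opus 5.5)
Your proposal is correct and follows the paper's proof. You read the hypothesis as $\{a,b\},\{b,c\}\in\mathcal{C}$ via Axiom~\ref{axiom3c} and Condition~\ref{C1c}, apply (C3) with $e=b$, exclude the singletons, and conclude that $\{a,c\}$ is a circuit carrying $U_{\{a,c\}}$; your use of (C2) to rule out $\{a\}$ and $\{c\}$ as proper subsets of the circuits $\{a,b\}$ and $\{b,c\}$ is just a more explicit version of the paper's appeal to the minimality of circuits.
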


\begin{proof}
    Let $\ket{M}$ be the matroid state derived from $M$ having the operators $U_{\{a, b\}}$ and $U_{\{b, c\}}$. From Axiom~\ref{axiom3c} and Condition~\ref{C1c}, since the operators are only defined in circuits, it follows that $C_1 =\{a, b\}$ and $C_2=\{b, c\}$ are circuits of $M$. By Item $\bf{(C3)}$ of Proposition~\ref{propcircuits}, it follows that there exists a circuit $C_3 \subseteq (C_1 \cup C_2)\setminus \{b\}$. Since circuits are nonempty minimal dependent sets, neither $\{a\}$ nor $\{c\}$ is a circuit. Hence, there exists a unique possibility for $C_3$, i.e., $C_3 =\{a, c\}$. Therefore, again from Axiom~\ref{axiom3c} and Condition~\ref{C1c}, there exists an operator $U_{\{a, c\}}$ that acts nontrivially on $\{a, c\}$.
\end{proof}

The ideas contained in the proof of Proposition~\ref{proptransitive} induce the following result.

\begin{corollary}\label{cortransitive}
     Let $M = (X, { \mathcal C})$ be a matroid, where the collection of circuits ${ \mathcal C}$ consists of all $2$-subsets of $X$ having two elements. Then there exists a transitive relation on $X$.
\end{corollary}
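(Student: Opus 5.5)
The natural relation to use is $R$ on $X$, defined by $a\,R\,b$ if and only if either $a=b$ or $\{a,b\}$ is a circuit of $M$. Equivalently, in the language of Proposition~\ref{proptransitive}, $a\,R\,b$ for $a\neq b$ means that there is a circuit operator $U_{\{a,b\}}$ acting nontrivially on $\mathcal{H}_a\otimes\mathcal{H}_b$. The plan is to show that $R$ is transitive, and in fact an equivalence relation. The hypothesis that $\mathcal{C}$ consists of $2$-subsets is exactly what makes the exchange argument in the proof of Proposition~\ref{proptransitive} apply to every pair of circuits sharing an element.

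\textbf{Step 1 (reflexivity and symmetry).} Reflexivity holds by definition. Symmetry holds because $\{a,b\}=\{b,a\}$ as sets, and Condition~\ref{C2c} gives $U_{\{a,b\}}=U_{\{b,a\}}$, so the operator reading is symmetric as well.

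\textbf{Step 2 (transitivity).} Suppose $a\,R\,b$ and $b\,R\,c$.
\begin{itemize}
\item If any two of $a,b,c$ coincide, then $a\,R\,c$ follows immediately from reflexivity or symmetry.
\item Otherwise $C_1=\{a,b\}$ and $C_2=\{b,c\}$ are distinct circuits with $b\in C_1\cap C_2$. By $\mathbf{(C3)}$ of Proposition~\ref{propcircuits}, there is a circuit $C_3\subseteq\{a,c\}$.
\item By the hypothesis on $\mathcal{C}$, every circuit has exactly two elements. Hence $C_3=\{a,c\}$, so $a\,R\,c$. This is precisely Proposition~\ref{proptransitive}.
\item Axiom~\ref{axiom3c} together with Condition~\ref{C1c} then supplies the operator $U_{\{a,c\}}$ acting nontrivially on $\mathcal{H}_a\otimes\mathcal{H}_c$.
\end{itemize}

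\textbf{Expected obstacle.} The main difficulty is interpretive rather than technical. Read literally, ``all $2$-subsets of $X$'' makes every pair related, so transitivity is immediate; this is the uniform matroid $\mathcal{U}_{1,n}$ of Proposition~\ref{completegraph}. I would therefore present the argument in the stronger form above: whenever all circuits of $M$ have size two, $R$ is transitive and its classes partition $X$ into blocks of mutually parallel elements, each block carrying a complete set of pairwise operators. This covers the literal statement as a special case and explains why it holds. The only care needed is handling the degenerate cases $a=c$ or $a=b$ before invoking $\mathbf{(C3)}$, since that axiom requires the two circuits to be distinct.
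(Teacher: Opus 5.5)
Your proposal is correct and follows the same route as the paper. You relate $a$ and $b$ when they lie in a common (two-element) circuit, and you apply $\mathbf{(C3)}$ to $\{a,b\}$ and $\{b,c\}$, using the size-two hypothesis to force $\{a,c\}\in\mathcal{C}$. Your two additions are minor refinements the paper's proof leaves implicit: the reflexive clause, and the explicit handling of the cases where two of $a,b,c$ coincide, which matters because $\mathbf{(C3)}$ only applies to distinct circuits.
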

\begin{proof}
    Define the following relation $R$: for $a, b \in X$, $a R b$ if and only if there exists a circuit containing $a$ and $b$. We will prove that $R$ is transitive. Assume that $aRb$ and $bRc$ are true. Since all circuits have two elements, we know that $C_1 =\{a, b\}$ and $C_2 =\{b, c\}$ are circuits. From Item $\bf{(C3)}$ of Proposition~\ref{propcircuits}, there exists a circuit $C_3=\{a, c\} $, e.g., $aRc$.
\end{proof}

\begin{proposition}\label{separablec}
    Let $n$ be an integer, $n \geq 2$, and $M= (X, { \mathcal C})$ be a matroid, with $|X|=n$, ${ \mathcal C}= \{ \{a_1\}, \ldots , \{a_t\}\}$, and $1 \leq t \leq n$. Then the corresponding matroid state $\ket{M}$ is separable.
\end{proposition}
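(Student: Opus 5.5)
The natural route is to combine the Universal Circuit Operator axiom with the Locality condition. Every circuit of $M$ is a singleton $\{a_j\}$, so each circuit operator acts on a single tensor factor, and a product of single-site operators applied to a product state stays a product state.

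First, check that the circuit family ${\mathcal C}= \{ \{a_1\}, \ldots , \{a_t\}\}$ (with distinct $a_j$) satisfies \textbf{(C1)}--\textbf{(C3)}. \textbf{(C1)} and \textbf{(C2)} are immediate because distinct singletons are incomparable. \textbf{(C3)} holds vacuously, since two distinct singleton circuits have empty intersection. By Theorem~\ref{caracircuits}, $M$ is therefore a matroid whose elements $a_1,\ldots,a_t$ are loops. The same vacuity of \textbf{(C3)} means the circuits can be added one at a time, starting from the empty matroid $M_n^{\emptyset}$. Every intermediate family $\{\{a_1\},\ldots,\{a_k\}\}$ is again a legitimate circuit set, so each step of Axiom~\ref{axiom3c} adds a single circuit $C=\{a_j\}$ and never a bundle $[C]$.

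Next, apply equation~\eqref{quantumstate} together with Proposition~\ref{prop1c}:
\[
\ket{M}=\prod_{j=1}^{t}U_{\{a_j\}}\,\ket{\phi}^{\otimes n}.
\]
By Condition~\ref{C1c}, each $U_{\{a_j\}}$ is $u_j\otimes I^{\otimes(n-1)}$, with $u_j$ acting only on ${\mathcal H}_{a_j}$. Since the $a_j$ are distinct, these operators act on different tensor factors, so they commute and the order of the product is irrelevant. We obtain
\[
\ket{M}=\bigotimes_{x\in X}\ket{\psi_x},\qquad \ket{\psi_x}=u_j\ket{\phi}\ \text{if } x=a_j,\quad \ket{\psi_x}=\ket{\phi}\ \text{otherwise},
\]
which is a fully separable product state.

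As a cross-check, one can argue through Axiom~\ref{axiom1c} instead. Here $M$ is the direct sum of the one-element loop matroids on $\{a_j\}$ and the empty matroid on $X\setminus\{a_1,\ldots,a_t\}$. Separability then writes $\ket{M}$ as a tensor product of single-site states, using Corollary~\ref{coro1c} for the empty summand.

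The main obstacle is justifying that Condition~\ref{C1c} pins $U_{\{a_j\}}$ down to a genuinely single-site operator. This requires that the operator depend only on its circuit, not on the order in which circuits are added, which is exactly what Axiom~\ref{axiom3c} guarantees. One must also confirm that no bundled addition $[C]$ is ever forced, which is the vacuous \textbf{(C3)} argument above. Once these two points are settled, the rest is routine tensor-product bookkeeping.
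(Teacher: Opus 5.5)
Your proposal is correct and follows essentially the same route as the paper: Condition~\ref{C1c} forces each singleton-circuit operator $U_{\{a_j\}}$ to act on a single tensor factor, so applying these operators to the initial product state gives a product state. Your additional checks (the vacuity of \textbf{(C3)}, which means no bundled addition $[C]$ is ever forced, the explicit product formula, and the direct-sum cross-check via Axiom~\ref{axiom1c}) spell out what the paper's two-line proof leaves implicit; in the cross-check, when $t=n$ the empty summand should simply be dropped, since the paper requires nonempty ground sets.
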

\begin{proof}
    From Condition~\ref{C1c}, for each circuit $\{a_i\} \in { \mathcal C}$ and $i=1, \ldots, t$, there exists a unique operator that acts nontrivially on the Hilbert space corresponding to $a_i$ and as the identity on the remaining Hilbert spaces. Hence, all operators act only locally on a single Hilbert space and, consequently, $\ket{M}$ is separable. 
\end{proof}
It is interesting to note that the reciprocal of Proposition~\ref{separablec} is not true. As a counterexample, consider the uniform matroid ${\mathcal U}_{1,2}$ with $X=\{a,b\}$. The (unique) circuit of \  ${\mathcal U}_{1,2}$ is $\{a, b\}$. From Condition~\ref{C1c}, we assign the operator $U_{a, b}= \sigma_x \otimes \sigma_x$ that acts nontrivially on $\{a, b\}$. Therefore, the matroid state $\ket{M} = \sigma_x\ket{\psi_a}\otimes\sigma_x\ket{\psi_b}$ is separable, but the circuit has more than one element.

In the next result, we show that every graph state can be constructed from a finite set of matroids whose ground sets are the vertices of the corresponding graph.

\begin{proposition}\label{matroidgenerator}
Let $G=(V, E)$ be a graph with $V=\{v_1 , \ldots , v_n\}$. Then there exist matroids $M_{ij}$, all with ground set $V$, whose successive applications of the corresponding circuit operators produce the graph state $\ket{G}$.
\end{proposition}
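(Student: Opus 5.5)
Decompose the graph edge by edge: for each edge of $G$, build a small matroid on $V$ whose only circuit is that edge, and let its circuit operator be $\CZ$. The graph state is then recovered as the ordered product of these circuit operators acting on $\ket{+}^{\otimes n}$. The subtlety is that the circuit operators of a single matroid cannot directly encode all of $G$, because the edge set of $G$ need not be the circuit set of any matroid. By Proposition~\ref{proptransitive} and Corollary~\ref{cortransitive}, two-element circuits sharing a vertex force a third circuit closing the triangle. So a path $v_1v_2v_3$ is not a matroid circuit set, and we must split $E$ across several matroids.

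Concretely, for each edge $\{v_i,v_j\}\in E$, define $M_{ij}=(V,\mathcal{C}_{ij})$ with $\mathcal{C}_{ij}=\{\{v_i,v_j\}\}$. Check (C1)--(C3) of Proposition~\ref{propcircuits}: (C1) and (C2) are immediate, and (C3) is vacuous since there is only one circuit. By Theorem~\ref{caracircuits}, $M_{ij}$ is then a matroid on $V$. It is the direct sum of $\mathcal{U}_{1,2}$ on $\{v_i,v_j\}$ with free elements, which by Axiom~\ref{axiom1c} fits the separability axiom. Starting from the empty matroid $M_n^{\emptyset}$, Axiom~\ref{axiom3c} gives $\ket{M_{ij}}=U_{\{v_i,v_j\}}\ket{\phi}^{\otimes n}$. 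Choose $\mathcal{H}_1=\mathbb{C}^2$, $\ket{\phi}=\ket{+}$ and $U_{\{v_i,v_j\}}=\CZ_{ij}\otimes I^{\otimes(n-2)}$, exactly as in Proposition~\ref{completegraph}. This choice satisfies Conditions~\ref{C1c}, \ref{C2c} and \ref{C3c}: $\CZ$ is local to the edge, symmetric in its two qubits, and diagonal, so any two such operators commute.

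Then apply the circuit operators of the $M_{ij}$ successively:
\begin{equation*}
\prod_{\{v_i,v_j\}\in E} U_{\{v_i,v_j\}}\ket{+}^{\otimes n}=\prod_{\{v_i,v_j\}\in E}\CZ_{ij}\ket{+}^{\otimes n}=\ket{G}.
\end{equation*}
This is the standard definition of the graph state, equivalently Axiom~\ref{axiomspiller} applied edge by edge. Because all the operators commute, the order of application is irrelevant, so the product is well defined.

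The main obstacle is interpretive rather than computational. We must justify that successive application of operators from \emph{different} matroids is legitimate, even though each intermediate state need not be a matroid state of a single matroid with circuit set equal to the edges applied so far. We handle this by invoking only the universality clause of Axiom~\ref{axiom3c}: $U_C$ depends only on $C$ and not on the ambient matroid, so each $U_{\{v_i,v_j\}}$ obtained from $M_{ij}$ is a fixed operator that may be composed with the others. If a smaller family is desired, we could instead group edges into matchings, since disjoint edges form a valid circuit set. We could also group edges into vertex-disjoint cliques, since a clique's edges are the circuits of $\mathcal{U}_{1,k}$ by Proposition~\ref{completegraph}. The single-edge choice is the simplest and suffices for the statement.
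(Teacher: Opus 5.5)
Your proposal is correct and follows essentially the same route as the paper's proof. You build one single-circuit matroid $M_{ij}=(V,\{\{v_i,v_j\}\})$ per edge, justified via (C1)--(C3) and Theorem~\ref{caracircuits}, and apply the corresponding $\CZ$ circuit operators successively to the empty matroid state, noting that intermediate states need not be matroid states; your explicit choice $\ket{\phi}=\ket{+}$, the check of Conditions~\ref{C1c}--\ref{C3c}, and the commutativity argument for order independence only make explicit what the paper leaves implicit.
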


\begin{proof}
The process is recursive. We start from the empty matroid state $\ket{M_{n}^{\emptyset}}$. Taking an edge $e_{ij}=\{v_{i}, v_{j}\}$ of $G$, we associate it with the ordered pair $M_{ij} =(V, {\mathcal C}_{ij} =\{\{v_i , v_j\}\})$. From Proposition~\ref{propcircuits} and Theorem~\ref{caracircuits}, it follows that $M_{ij}$ is a matroid having ${\mathcal C}_{ij} =\{\{v_i , v_j\}\}$ as the circuit collection. For the unique circuit $\{v_i , v_j\}$ of $M_{ij}$, we assign the operator $U_{\{v_i , v_j\}}=\CZ$ that acts nontrivially as the gate $\CZ$ on the Hilbert spaces associated with the vertices $v_i$ and $v_j$ of $\{v_i , v_j\}$ and the identity on the remaining qubits of $\ket{M_n}$, obtaining the matroid state $\ket{M_{ij}}$. Consider now the matroid state $\ket{M_{ij}}$. For another edge $e_{i' j'}=\{v_{i'}, v_{j'}\}$ of $G$, we again consider the corresponding matroid $M_{i' j' } =(V, {\mathcal C}_{i' j' } =\{\{v_{i' } , v_{j' }\}\})$. For the unique circuit $\{v_{i' }, v_{j'}\}$ of $M_{i' j' }$, we assign the operator $U_{\{v_{i'},v_{j'}\}}=\CZ$ that acts nontrivially as the gate $\CZ$ on the Hilbert spaces associated with the vertices $v_{i' }$ and $v_{j'}$ and the identity on the remaining qubits of the matroid state $\ket{M_{ij}}$, obtaining a new $n$ qubit state $\ket{Q_2}$ (which is not necessarily a matroid state). Similarly, starting from $\ket{Q_2}$, we take another edge $e_{i'' j'' }=\{v_{i'' }, v_{j'' }\}$ of $G$ and consider the matroid $M_{i'' j'' } =(V, {\mathcal C}_{i'' j''} =\{\{v_{i'' } , v_{j'' }\}\})$ associated with $e_{i'' j'' }$. Again, we assign to the circuit $\{v_{i'' }, v_{j''}\}$ the operator $U_{\{v_{i''},v_{j''}\}}=\CZ$ that acts nontrivially as the gate $\CZ$ on the Hilbert spaces associated with the vertices $v_{i'' }$ and $v_{j'' }$ and the identity on the remaining qubits of the quantum state $\ket{Q_2}$, obtaining another quantum $n$ qubit state $\ket{Q_3}$ (which is not necessarily a matroid state).
We repeat the process for any edge of $G$; since the number of steps in this process is finite, we obtain an $n$ qubit state which is exactly the graph state $\ket{G}$. 
Therefore, 
$$\ket{G}= \prod_{\{v_i , v_j \} \in E} U_{C_{ij}} \ket{M_{n}^{\emptyset}},$$ 
where $C_{ij}= \{v_i , v_j \} $. The proof is complete.
\end{proof}

\begin{remark}\label{remarkgraph}
    One can ask the question: What is the difficulty in finding a matroid state corresponding to each graph state in the case where the ground set of the matroid is the set of vertices of the respective graph? To answer this question, we give a simple example: consider a graph $G=(V, E)$, where $|V|= \{v_1,v_2,v_3\}$ and $E=\{ \{v_1,v_2\}, \{v_2,v_3\} \}$. The corresponding graph state has two gates $\CZ$ acting nontrivially on both $\{v_1 , v_2\}$ and $\{v_2 , v_3\}$ and the identity operator on the remaining qubit. Assume there exists a matroid $M_G= (V, {\mathcal C})$, where ${\mathcal C}$ is the collection of circuits of $M_G$, such that $\ket{M_G}=\ket{G}$. From Item ${ \bf (C3)}$ of Proposition~\ref{propcircuits}, the set $\{v_1,v_3\}$ must also be a circuit of $M_G$, since the sets $\{v_1\}$ and $\{v_3\}$ are both independent due to the definition of a circuit. Therefore, in $\ket{M_G}$, there must be three gates $\CZ$ acting nontrivially on the corresponding Hilbert spaces associated with the three circuits $\{v_1, v_2\}$, $\{v_2,v_3\}$, and  $\{v_1,v_3\}$, and the identity operator on the remaining qubit, which implies that $\ket{M_G} \neq \ket{G}$, a contradiction. This counterexample shows that the structure of a matroid does not allow us to ``choose'' every collection of $2$-sets as circuits.
\end{remark}

The following result establishes a connection between graphs and matroids.

\begin{proposition}\label{characterization}
Let $G=(V, E)$ be a graph. The ordered pair $M_G=(V, {\mathcal C})$, where ${\mathcal C}$ is the collection of $2$-subsets $C_{u v}=\{u , v\}$ of $V$ that correspond to all edges $\{u , v \}$ of $G$ is a matroid with the collection of circuits ${ \mathcal C}$ if and only if for all edges of $G$ such that $\{u, v\}, \{v, w\} \in E$, it implies $\{u, w\} \in E$.
\end{proposition}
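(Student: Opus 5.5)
We prove the two directions separately, using Theorem~\ref{caracircuits}: $\mathcal{C}$ is the circuit collection of a matroid on $V$ exactly when it satisfies \textbf{(C1)}, \textbf{(C2)} and \textbf{(C3)}. Since every member of $\mathcal{C}$ is a $2$-subset of $V$, two of these axioms hold automatically. \textbf{(C1)} holds because no member of $\mathcal{C}$ is empty. \textbf{(C2)} holds because one $2$-set contained in another must equal it. So the statement reduces to showing that \textbf{(C3)} holds for $\mathcal{C}$ if and only if $E$ has the stated transitivity property.

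($\Rightarrow$) Suppose $M_G$ is a matroid with circuit collection $\mathcal{C}$, and let $\{u,v\},\{v,w\}\in E$ with $u\neq w$. These are distinct circuits sharing $v$, so \textbf{(C3)} gives a circuit $C_3\subseteq\{u,v,w\}\setminus\{v\}=\{u,w\}$. Every circuit has exactly two elements, so $C_3=\{u,w\}$. By the definition of $\mathcal{C}$, this means $\{u,w\}\in E$. This is the same argument as in Proposition~\ref{proptransitive} and Corollary~\ref{cortransitive}.

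($\Leftarrow$) Assume the transitivity condition, and let $C_1\neq C_2$ be circuits with $e\in C_1\cap C_2$. Two distinct $2$-sets meet in at most one element, so we may write $C_1=\{u,e\}$ and $C_2=\{e,w\}$ with $u\neq w$. Both are edges, so by hypothesis $\{u,w\}\in E$, hence $\{u,w\}\in\mathcal{C}$. It is contained in $(C_1\cup C_2)\setminus\{e\}=\{u,w\}$, so \textbf{(C3)} holds. Theorem~\ref{caracircuits} then shows that $(V,\mathcal{I})$ is a matroid whose collection of circuits is exactly $\mathcal{C}$.

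The main subtlety is interpreting the condition correctly. It must be read only for $u\neq w$, since for $u=w$ it would ask for a loop $\{u,u\}$, which is not an edge of a simple graph. It must also be read as applying to all pairs of adjacent edges, with the vertices in any orientation. We will state explicitly that $u,v,w$ are pairwise distinct, which is automatic from $\{u,v\},\{v,w\}$ being distinct $2$-sets with $u\neq w$. We will also note the consequence of the condition: each connected component of $G$ is complete, which matches Proposition~\ref{completegraph} and Remark~\ref{remarkgraph}.
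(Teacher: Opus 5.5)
Your proof is correct and follows the paper's argument essentially step for step. In the forward direction you apply \textbf{(C3)} to $\{u,v\},\{v,w\}$ and use that every circuit is a $2$-set to force $C_3=\{u,w\}$; in the converse you check \textbf{(C1)}--\textbf{(C3)} and invoke Theorem~\ref{caracircuits}. Your explicit restriction to $u\neq w$ is a worthwhile clarification that the paper leaves implicit: for $u=w$ the two circuits coincide and \textbf{(C3)} does not apply.
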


\begin{proof}
Assume that $M_G=(V, { \mathcal C})$ is a matroid with the collection of circuits ${\mathcal C}$ consisting of the $2$-subsets of $V$ corresponding to all edges of $G$. Suppose $\{u, v\}$ and $\{v, w\}$ are edges of $G$. Then there exist circuits $C_{uv}=\{u, v\}$ and $C_{vw}=\{v, w\}$ of $M_G$. From Item {\bf (C3)} of Proposition~\ref{propcircuits}, it follows that there exists a circuit $C$ such that $C \subseteq [(C_{uv} \cup C_{vw})\setminus\{v\}] = \{u, w\}$. The unique possibility for the circuit $C$ is $C= \{u, w\}$. Again, from the definition of $M_G$, it follows that $\{u, w\}$ is an edge of $G$, i.e., $\{u, w\} \in E$.
    
On the other hand, let $M_G = (V, {\mathcal C})$ be the ordered pair, where the elements of ${\mathcal C}$ are the $2$-subsets $C_{u v}=\{u , v\}$ of $V$ that correspond to all edges $\{u , v \}$ of $G$. In order to show that $M_G$ is a matroid, we prove that ${ \mathcal C}$ satisfies Items {\bf (C1)}, {\bf (C2)}, and {\bf (C3)} of Proposition~\ref{propcircuits}, after applying Theorem~\ref{caracircuits}. Items {\bf (C1)} and {\bf (C2)} clearly hold. We next prove {\bf (C3)}. If $C$ and $C^{*}$ are distinct members of ${\mathcal C}$ and $C \cap C^{*} =\emptyset$, there is nothing to prove. Let $C$ and $C^{*}$ be distinct members of ${\mathcal C}$ such that $v \in C \cap C^{*}$. From the definition of ${ \mathcal C}$, $C=C_{uv}=\{u, v\}$ and $C^{*}=C_{vw}=\{v, w\}$, where $\{u, v\}$ and $\{v, w\}$ are edges of $G$. From the hypothesis, it follows that $\{u, w\} \in E$. Again, from the definition of $M_G$, we have $C_{uw}=\{u, w\} \in { \mathcal C}$ and  $C_{uw}\subseteq (C \cup C^{*})\setminus \{v\}$. Hence, $M_G$ is a matroid. The proof is complete.
\end{proof}

Corollary~\ref{graphmatro}, shown in the sequence, means that there exist certain types of graph $G$ that induce matroids $M_G$ such that the corresponding graph state is equal to the corresponding matroid state.

\begin{corollary}\label{graphmatro}
     Let $G=(V, E)$ be a graph that satisfies the following property: for all edges of $G$ such that $\{u, v\}, \{v, w\} \in E$, it implies $\{u, w\} \in E$. Then there exists a matroid $M_G$ such that $\ket{G} = \ket{M_G}$.
\end{corollary}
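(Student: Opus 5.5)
The plan is to combine Proposition~\ref{characterization} with the circuit-operator construction of Eq.~\eqref{quantumstate}, choosing the operators so that they coincide with the graph-state gates. Given $G=(V,E)$ with the stated transitivity property, define $M_G=(V,\mathcal{C})$ with $\mathcal{C}=\{\{u,v\}:\{u,v\}\in E\}$. By the ``if'' direction of Proposition~\ref{characterization}, $\mathcal{C}$ satisfies \textbf{(C1)}--\textbf{(C3)}. Theorem~\ref{caracircuits} then shows that $M_G$ is a matroid on $V$ whose collection of circuits is exactly $\mathcal{C}$. This first step is immediate.

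Next I construct $\ket{M_G}$. I take the one-element Hilbert space to be $\mathcal{H}_1=\mathbb{C}^2$ and the empty-matroid state to be $\ket{+}^{\otimes n}$, as allowed by Proposition~\ref{prop1c}, since $\dim\mathcal{H}_1$ and $\ket{\phi}$ are free parameters. To every circuit $C=\{u,v\}\in\mathcal{C}$ I assign $U_C=\CZ_{uv}\otimes I^{\otimes(n-2)}$, exactly as in the proof of Proposition~\ref{completegraph}. I then check Conditions~\ref{C1c}--\ref{C3c}:
\begin{itemize}
\item locality holds by construction;
\item symmetry holds because $\CZ=\diag(1,1,1,-1)$ is invariant under swapping its two qubits;
\item commutativity holds because all $U_C$ are diagonal in the computational basis.
\end{itemize}
By Eq.~\eqref{quantumstate} this gives $\ket{M_G}=\prod_{C\in\mathcal{C}}U_{[C]}\ket{+}^{\otimes n}$.

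The main obstacle is the grouping of circuits into blocks $[C]$ required by Axiom~\ref{axiom3c}. Adding circuits one at a time from the empty matroid may pass through collections that are not matroids; Example~\ref{exe2} shows that triangles must be added together. I would handle this by observing how the transitivity hypothesis shapes $G$. Whenever $\{u,v\},\{v,w\}\in E$ with $u\neq w$, we also have $\{u,w\}\in E$, so every connected component of $G$ is a complete graph. I would therefore add the circuits component by component, adding all edges of a complete component $K_m$ in one step as the block $[C]$; each intermediate collection is then a union of complete graphs and hence a matroid by Proposition~\ref{characterization}. Axiom~\ref{axiom3c} sets $U_{[C]}$ equal to the product of the individual $U_{C_{i_j}}$, and these commute, so ordering is irrelevant.

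Finally, I compare the result with $\ket{G}=\prod_{\{u,v\}\in E}\CZ_{uv}\ket{+}^{\otimes n}$ from Proposition~\ref{matroidgenerator}. The two products range over the same index set, since $\mathcal{C}$ corresponds bijectively to $E$, and they use identical commuting factors. Hence $\ket{M_G}=\ket{G}$. Equivalently, by Axiom~\ref{axiom1c} the state factorizes over components, each factor being the complete graph state of Proposition~\ref{completegraph}.
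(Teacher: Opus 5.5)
Your proposal is correct and follows the paper's proof. Like the paper, you invoke Proposition~\ref{characterization} to obtain the matroid $M_G$ whose circuits are the edges, attach $\CZ$ to each $2$-element circuit, and identify the resulting product with the graph-state product. Your explicit checks of Conditions~\ref{C1c}--\ref{C3c} and of the circuit grouping go beyond the paper's one-line ``from the construction.'' One small point: Axiom~\ref{axiom3c} speaks of \emph{minimal} blocks, so strictly you would grow each clique one vertex at a time rather than adding all of $K_m$ at once, but by commutativity the final product is unchanged.
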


\begin{proof}
    From Proposition~\ref{characterization}, it follows that $M_G$ is a matroid whose circuits are the $2$-subsets of the edges of $G$. For each circuit $C_{uv}=\{u, v\}$ of $M_G$, we assign the gate $\CZ$ that acts nontrivially as the gate $\CZ$ on the qubits corresponding to $u$ and $v$, and as the identity on the remaining qubits. From the construction, one has $\ket{G}=\ket{M_G}$. In particular, if the edges of $G$ are mutually disjoint, e.g, they do not share a common vertex, then $\ket{G} = \ket{M_G}$.
\end{proof}

It is known that any matroid is a hypergraph: it is sufficient to assign the ground set $X$ of the matroid $M= (X, { \mathcal C})$ to the set of vertices $V$ of the hypergraph $H=(V, {\mathcal E})$, and each circuit $C \in { \mathcal C}$ of $M$ to a hyperedge $E \in {\mathcal E}$. Therefore, any matroid state is, in fact, a hypergraph state if we consider that the operators given for the corresponding circuits (except for operators that act nontrivially on a single qubit) are the same generalized $\CCZ{|E|-1}\equiv \mathbbm{1}-2\ketbra{1 \ldots 1}$ gate that acts ${\sigma}_{z}$ on one of the qubits conditioned on $|E|-1$ others being $1$ \cite{Liu:2022}.
Then, what are the advantages of applying matroid states rather than hypergraph states? The answer is that the structure of a matroid $M$ provides much more information about the corresponding matroid state compared to hypergraph states (see, for instance, Propositions~\ref{proptransitive} and \ref{operatorproperty}, \ref{separablec}, \ref{matroidgenerator}, \ref{characterization}).
A natural question is whether there exists a matroid state that cannot be reproduced from graph states using only local operations. The answer to this question is yes. In fact, if we consider the matroid $M=(X, {\mathcal C})$, where $X=\{a, b, c\}$ and ${\mathcal C}= \{ \{a\}, \{b\} \}$, then the corresponding matroid state is separable, while all graph states, except for the empty graph state, are entangled. Note that it is reasonable not to consider the application of local operators because the operators that act on graph states are $2$-qubit operators.

\section{Matroid States from Independent Sets}
\label{sec:mssets}

The first task is to define the concept of separability of matroid states induced by independent sets. At first glance, it seems that the more natural way to do this is to use the matroid direct sum. Let us then recall the direct sum of matroids in terms of independent sets.
Let $M_1=(X_1 , {\mathcal I}_1 )$ and $M_2=(X_2 , {\mathcal I}_2 )$ be matroids on disjoint sets $X_1$ and $X_2$, and let $X= X_1 \cup X_2$. 
Then the matroid $(X, {\mathcal I})$, where ${\mathcal I}= \{I_1 \cup I_2 | I_1 \in {\mathcal I}_1 , \ I_2 \in {\mathcal I}_2 \}$ is called the direct sum of $M_1$ and $M_2$, is denoted by $M_1 \oplus M_2$.
Note that the characteristic of the independent sets of the direct sum $M_1 \oplus M_2$ is not convenient for defining the separability of matroid states since independent sets of $M_1 \oplus M_2$ are “mixed", and this fact allows us to define entanglement operators from these independent sets.

As a simple example to see this, consider two disjoint matroids $M_1 = (X_1, { \mathcal I}_1)$, where $X_1 =\{a\}$ and ${ \mathcal I}_1 = \{ \emptyset, \{a\}\}$, and $M_2 = (X_2,{\mathcal I}_2)$, where $ X_2 = \{b\}$ and $ { \mathcal I}_2 = \{ \emptyset, \{b\}\}$. It then follows that the collection of independent sets of the direct sum is given as ${\mathcal I}_{M_1 \oplus M_2}= \{ \emptyset, \{a\}, \{b\}, \{a, b\}\}$. Putting the operator $U_{a, b} = \CZ$, we will have an entangled matroid state $\ket{M}$, while  $M_1 \oplus M_2$ is always disconnected. Consequently, in this context, where operators must be assigned to independent sets, separability cannot be described by $M_1 \oplus M_2$. In other words, when considering independent sets, we need a different axiomatization of separability for matroid states.

To motivate this new manner of axiomatizing separability, recall that the sum $G \Delta G'$ of two graphs (hypergraphs) $G = (V, E)$ and $G' = (V',E')$ is the graph (hypergraph) defined as $G \Delta G' = (V \cup V',E \Delta E')$, where $\Delta$ is the symmetric difference of $E$ and $E'$, i.e., $E \Delta E' = E \cup E' \setminus E \cap E'$. In the case of a direct sum, e.g., $V$ and $V'$ being disjoint, we have $E \Delta E' = E \cup E'$. Then there exist $|E| + |E'|$ edges of $G \Delta G'$, where the edges of $E$ are disjoint from the edges of $E'$. Therefore, there are no mixed edges.
Similarly, the set of circuits of $M_1 \oplus M_2$ is the union $ { \mathcal C}_1 \uplus { \mathcal C}_2 $ of $ {\mathcal C}_1$ and  ${\mathcal C}_2$. This implies that there exist $| { \mathcal C}_1 | + |{ \mathcal C}_2|$ circuits, where the circuits of ${ \mathcal C}_1$ are disjoint from the circuits of ${ \mathcal C}_2$. Again, there are no mixed circuits. Based on these facts, given two matroids $M_1=(X_1 , {\mathcal I}_1 )$ and $M_2=(X_2 , 
{\mathcal I}_2 )$ on disjoint sets $X_1$ and $X_2$, we define the disjoint union $M_1 \uplus M_2$ of $M_1$ and $M_2$ as $$M_1 \uplus M_2 = ( X_1 \cup X_2 , {\mathcal I}_1 \cup{\mathcal I}_2 ).$$
We know that $M_1 \uplus M_2$ is not a matroid, but the cardinality of the set ${\mathcal I}_1 \cup{\mathcal I}_2$ is the sum of the cardinalities of ${\mathcal I}_1$ and ${\mathcal I}_2$, and there are no mixed independent sets. Therefore, the most natural way to define the separability of matroid states is by means of $M_1 \uplus M_2$, because although it is not a
matroid itself, it is defined in terms of two matroids. 

\begin{axiom}\label{axiom1s}
	\emph{Separability}. Let $M_1=(X_1 , {\mathcal I}_1 )$ and $M_2=(X_2 , 
	{\mathcal I}_2 )$ be two matroids on disjoint sets $X_1$ and $X_2$. 
	Then one has $\ket{M_1 \uplus M_2} = \ket{M_1}\otimes\ket{M_2}$.
\end{axiom}

The following axioms and results are similar to those presented in Section~\ref{sec:mscirc}. 

\begin{definition}\label{emptyindep}
    Let $M_{n}^{\emptyset} =(X, {\mathcal I})$ be a matroid with ${\mathcal I}=\{ \emptyset \}$ and $|X|=n$. Then $M_{n}^{\emptyset}$ is called the empty matroid on $n$ elements.
\end{definition}

\begin{corollary}\label{coro1}
    If $M_{n}^{\emptyset}=(X, \{ \emptyset \} )$ is the empty matroid, where $|X|=n$, then $\ket{M_{n}^{\emptyset}} = \ket{{\phi}_1}\otimes \ket{{\phi}_2}\otimes \ldots \otimes \ket{{\phi}}_n$.
\end{corollary}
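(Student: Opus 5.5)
The plan follows the proof of Corollary~\ref{coro1c} in the circuit setting, but uses the independent-set version of separability in Axiom~\ref{axiom1s}. In the empty matroid $M_{n}^{\emptyset}=(X,\{\emptyset\})$ the only independent set is $\emptyset$. In this framework operators are attached to nonempty independent sets, by analogy with Axiom~\ref{axiom3c} and Condition~\ref{C1c}. Locality then says that an operator attached to $I$ acts nontrivially only on $\bigotimes_{i\in I}\mathcal{H}_i$. An operator attached to $\emptyset$ would act on no factor, so it is at most a global phase. Hence no operator acting nontrivially on any subsystem is applied, and whatever state is assigned to $M_{n}^{\emptyset}$ is not altered by any entangling operation.

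To make this a structural statement rather than an appeal to "no operators," I would decompose the ground set. Write $X=\{x_1,\ldots,x_n\}$ and consider the one-element empty matroids $M_{\{x_i\}}^{\emptyset}=(\{x_i\},\{\emptyset\})$, each of which satisfies axioms (I1)--(I3) trivially. By the definition of the disjoint union $\uplus$, the union of these matroids has ground set $X$ and collection of independent sets $\{\emptyset\}\cup\cdots\cup\{\emptyset\}=\{\emptyset\}$. This is exactly $M_{n}^{\emptyset}$, and here, unlike in general, the union is itself a matroid. I would then apply Axiom~\ref{axiom1s} inductively on $n$: peel off one element at a time, writing $M_{n}^{\emptyset}=M_{n-1}^{\emptyset}\uplus M_{\{x_n\}}^{\emptyset}$. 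This gives $\ket{M_{n}^{\emptyset}}=\ket{M_{n-1}^{\emptyset}}\otimes\ket{M_{\{x_n\}}^{\emptyset}}$, and setting $\ket{\phi_i}:=\ket{M_{\{x_i\}}^{\emptyset}}\in\mathcal{H}_i$ yields the product form.

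The only delicate point is that Axiom~\ref{axiom1s} is stated for the pair $(M_1,M_2)$ of matroids, while its left-hand side $M_1\uplus M_2$ is generally not a matroid. I therefore need to check that at every inductive step both pieces are genuine matroids, which they are, since each is an empty matroid. I also need the identification $M_{n-1}^{\emptyset}\uplus M_{\{x_n\}}^{\emptyset}=M_{n}^{\emptyset}$ as pairs (ground set, independent sets). This is a direct set computation, so I expect this bookkeeping, not any quantum argument, to be the main (mild) obstacle. If one also wants all the $\ket{\phi_i}$ equal, one would invoke Axiom~\ref{axiom2c} exactly as in Proposition~\ref{prop1c}, since any two one-element empty matroids are isomorphic.
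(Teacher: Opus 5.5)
Your proof is correct. Its load-bearing step differs from anything the paper provides.

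The paper gives no proof of Corollary~\ref{coro1}. It only says the results of this section are analogous to those of Section~\ref{sec:mscirc}. There, Corollary~\ref{coro1c} is justified by exactly your first paragraph: nothing can carry an operator (here, there is no nonempty independent set), so nothing acts. On its own, that argument only shows the state is never modified. It tacitly assumes the starting state is already a product, which is essentially the claim.

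Your second step instead derives the product structure from the separability axiom. Since $\{\emptyset\}\cup\{\emptyset\}=\{\emptyset\}$, the disjoint union $M_{n-1}^{\emptyset}\uplus M_{\{x_n\}}^{\emptyset}$ is literally the matroid $M_{n}^{\emptyset}$. So Axiom~\ref{axiom1s} applies to two genuine matroids, and induction gives the tensor product with $\ket{\phi_i}=\ket{M^{\emptyset}_{\{x_i\}}}$. This mirrors how \cite{Ionicioiu:2012} obtain the empty-graph corollary from their separability axiom. It also makes Corollary~\ref{coro2} an actual consequence rather than a heuristic, and it makes your first paragraph dispensable.

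One bookkeeping point: in this section the relevant results are Axiom~\ref{axiom3}, Condition~\ref{C1}, Axiom~\ref{axiom2} and Proposition~\ref{prop1}. You cite their circuit-section counterparts \ref{axiom3c}, \ref{C1c}, \ref{axiom2c} and \ref{prop1c}. The substance of your remarks is unaffected.
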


\begin{corollary}\label{coro2}
    Given a matroid $M=(X, { \mathcal I})$ with $X=\{a_1 , \ldots , a_n\}$, we associate with each element $a_i \in X$, $i=1, \ldots,n$ a Hilbert space ${\mathcal H}_i$. The total Hilbert space is given by ${\mathcal H} = \displaystyle \bigotimes_{i=1}^{n}{\mathcal H}_i$.
\end{corollary}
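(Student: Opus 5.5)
The statement is Corollary~\ref{coro2}. It is essentially definitional, so the plan is to mirror the argument given in Section~\ref{sec:mscirc} for Corollary~\ref{coro2c} and Proposition~\ref{prop1c}, adapted to the separability Axiom~\ref{axiom1s} for independent sets. The aim is to show that the state space of $\ket{M}$ must factor as a tensor product over the elements of $X$, with one factor ${\mathcal H}_i$ attached to each $a_i$.

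The first step is to decompose the ground set into singletons. For each $a_i$, consider the one-element matroid $M_i=(\{a_i\},{\mathcal I}_i)$, and take ${\mathcal I}_i=\{\emptyset\}$ to match the empty matroid of Definition~\ref{emptyindep}. Corollary~\ref{coro1} already says that the empty matroid on $X$ is sent to a product $\ket{\phi_1}\otimes\cdots\otimes\ket{\phi_n}$. Applying Axiom~\ref{axiom1s} repeatedly to $M_1\uplus\cdots\uplus M_n$ identifies the $i$-th factor with the state of the single-element structure $M_i$. I will then define ${\mathcal H}_i$ as the Hilbert space containing $\ket{M_i}$, that is, the space of the single element $a_i$.

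The second step passes from the empty matroid to an arbitrary $M=(X,{\mathcal I})$. By the analogue of \eqref{empty}, $\ket{M}$ is obtained from $\ket{M_n^{\emptyset}}$ by operators acting on the same space. Hence $\ket{M}$ lies in $\bigotimes_{i=1}^n{\mathcal H}_i$, and I take this as ${\mathcal H}$.

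The only delicate point is consistency. The factor attached to $a_i$ must not depend on the rest of the structure, and the iterated disjoint union must be associative. Associativity holds because ${\mathcal I}_1\cup{\mathcal I}_2$ involves no mixed sets, so grouping in $\uplus$ is irrelevant. I will treat this as immediate, exactly as the paper did for Corollary~\ref{coro2c}.
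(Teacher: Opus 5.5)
Your first step is essentially the paper's argument. The paper only observes that, by Corollary~\ref{coro1}, the empty matroid is sent to a product state, so each $a_i$ carries its own factor $\ket{\phi_i}$, and it attaches ${\mathcal H}_i$ to that factor. Your decomposition of $M_n^{\emptyset}$ into one-element empty matroids via Axiom~\ref{axiom1s} makes this explicit. One correction: iterating Axiom~\ref{axiom1s} is legitimate not because $\uplus$ is associative, but because $\{\emptyset\}\cup\{\emptyset\}=\{\emptyset\}$. Hence every partial union $M_1\uplus\cdots\uplus M_k$ is again a matroid, namely the empty matroid on $k$ elements. For general matroids $\uplus$ does not produce a matroid, so the axiom, which is stated for two matroids, could not be reapplied.

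Your second step, however, is circular. Equation~\eqref{empty} is not an independent fact. In Section~\ref{sec:mscirc} it is deduced from Proposition~\ref{prop1c}, i.e.\ from the statement that all matroids on an $n$-element ground set are mapped into one and the same space ${\mathcal H}_1^{\otimes n}$. Its independent-set analogue would rest on Proposition~\ref{prop1}, which comes after, and builds on, the corollary you are proving. So saying that $\ket{M}$ is obtained from $\ket{M_n^{\emptyset}}$ ``by operators acting on the same space'' already assumes that an arbitrary $\ket{M}$ lies in $\bigotimes_i{\mathcal H}_i$.

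Nothing in Axioms~\ref{axiom1s} and \ref{axiom2} constrains where $\ket{M}$ lives when $M$ is not a disjoint union of smaller matroids, so this part cannot be derived at this stage. The paper treats it as a convention suggested by the empty matroid. It becomes a consistency check only later, once Axiom~\ref{axiom3} and Condition~\ref{C1} are in place, and these are themselves phrased in terms of the spaces ${\mathcal H}_{a_i}$. You should either drop step 2 and present the tensor-product structure for general $M$ as part of the framework, or state explicitly that it is postulated.
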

\begin{proof}
    Since by Corollary~\ref{coro1} the empty matroid is mapped to a product state, this induces the idea of mapping elements of the ground set $X$ to Hilbert spaces, because each $i \in X$ has an associated quantum state $\ket{\phi}_i$.
\end{proof}

\begin{axiom}\label{axiom2}
    \emph{Matroid Isomorphism}. Let $M_1=(X_1 , {\mathcal I}_1 )$ and $M_2=(X_2 , {\mathcal I}_2 )$ be two isomorphic matroids. Then the corresponding density operators ${\rho}_1 = \ketbra{M_1}$ and ${\rho}_2 = \ketbra{M_2}$ satisfy
\begin{eqnarray}
    {\rho}_2 = D( P_{1,2}) {\rho}_1 [D( P_{1,2})]^{-1},
\end{eqnarray}
     where $D( P_{1,2})$ is a matrix representation of the bijection $P_{1,2}$ mapping $M_1$ to $M_2$.
\end{axiom}

\begin{corollary}\label{coro3}
    If $P_M$ is an automorphism of the matroid $M=(X, { \mathcal I})$ then 
    $\left[ \rho , D(P_{M} )\right] =0$.
\end{corollary}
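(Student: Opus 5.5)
The corollary follows from Axiom~\ref{axiom2}, mirroring Corollary~\ref{coro3c}. I will specialize the isomorphism axiom to the case $M_1 = M_2 = M$ with $P_{1,2} = P_M$ and then rearrange the resulting identity.

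First, I check that an automorphism qualifies as an isomorphism in the sense of Definition~\ref{isomat}. An automorphism $P_M$ of $M=(X,\mathcal{I})$ is a bijection $P_M : X \to X$ such that, for every $A\subseteq X$, $P_M(A)\in\mathcal{I}$ if and only if $A\in\mathcal{I}$. This is exactly a matroid isomorphism from $M$ to itself, so Axiom~\ref{axiom2} applies with $M_1=M_2=M$ and $P_{1,2}=P_M$.

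Next, both density operators are the same operator $\rho=\ketbra{M}$, so Axiom~\ref{axiom2} gives
\begin{equation*}
\rho = D(P_M)\,\rho\,[D(P_M)]^{-1}.
\end{equation*}
Here $D(P_M)$ is the matrix representation of $P_M$ acting on $\mathcal{H}=\bigotimes_{i}\mathcal{H}_i$ (Corollary~\ref{coro2}), namely the operator that permutes the tensor factors according to $P_M$. It is invertible, being a permutation of tensor factors, and in fact unitary.

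Finally, I multiply the displayed identity on the right by $D(P_M)$. This yields $\rho\,D(P_M)=D(P_M)\,\rho$, that is, $[\rho, D(P_M)]=0$.

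The only point that needs care is that Axiom~\ref{axiom2} is meant to hold when the two matroids coincide and the bijection is nontrivial. I will read the axiom as holding for every isomorphism $P_{1,2}$, and in particular for self-isomorphisms. With that reading the argument is immediate and involves no combinatorial work.
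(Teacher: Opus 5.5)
Your proposal is correct and matches the paper's argument. The paper gives no separate proof of this corollary, but its circuit analogue, Corollary~\ref{coro3c}, is justified as ``follows directly from'' the isomorphism axiom; that is exactly what you do in spelling out the automorphism-as-self-isomorphism check, taking $M_1=M_2=M$, $P_{1,2}=P_M$ in Axiom~\ref{axiom2}, and multiplying $\rho = D(P_M)\,\rho\,[D(P_M)]^{-1}$ on the right by $D(P_M)$.
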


\begin{proposition}\label{prop1}
    Let $M=(X, {\mathcal I})$ be a matroid, where $|X|=n$. Then the corresponding matroid state $\ket{M}$ belongs to a Hilbert space ${ \mathcal H}$ of $n$ identical quantum systems ${\mathcal H}_1$, i.e., ${\mathcal H}=  {\mathcal H}_1^{\otimes n}$, where ${\mathcal H}_1$ is a Hilbert space associated with a single element of the ground set $X$. Additionally, the empty matroid $M_{n}^{\emptyset}$ is assigned to $M_{n}^{\emptyset} \rightarrow \ket{M_{n}^{\emptyset}} ={\ket{\phi}}^{\otimes n}$, where ${\phi} \in {\mathcal H}$.
\end{proposition}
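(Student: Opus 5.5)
The plan is to transplant the argument of Proposition~1 of \cite{Ionicioiu:2012}, already invoked for Proposition~\ref{prop1c}, to the independent-set setting. The only tools needed are Axiom~\ref{axiom1s} (separability via the disjoint union), Corollary~\ref{coro1}, Corollary~\ref{coro2} and Axiom~\ref{axiom2} (matroid isomorphism). There are two claims. The first is that the $n$ local spaces ${\mathcal H}_i$ attached to the elements of $X$ are all the same space ${\mathcal H}_1$. The second is that the local vectors $\ket{\phi_i}$ in the empty-matroid state coincide.

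\emph{Step 1: the empty matroid factorizes into one-element pieces.} Write $X=\{a_1,\ldots,a_n\}$. For each $i$, let $M^{(i)}=(\{a_i\},\{\emptyset\})$ be the empty matroid on one element. I will check that $M_n^{\emptyset}$ decomposes as $M^{(1)}\uplus\cdots\uplus M^{(n)}$, in the sense used in Axiom~\ref{axiom1s}: the ground sets are disjoint and each family of independent sets is $\{\emptyset\}$, so there are no mixed independent sets. Iterating Axiom~\ref{axiom1s} then gives
\[
\ket{M_n^{\emptyset}}=\ket{\phi_1}\otimes\cdots\otimes\ket{\phi_n},\qquad \ket{\phi_i}:=\ket{M^{(i)}}\in{\mathcal H}_i,
\]
which recovers Corollary~\ref{coro1} and fixes the tensor structure of Corollary~\ref{coro2}.

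\emph{Step 2: identical local systems.} Any bijection $\{a_i\}\to\{a_j\}$ is a matroid isomorphism $M^{(i)}\cong M^{(j)}$, since both independent-set families are $\{\emptyset\}$. Axiom~\ref{axiom2} then requires a map $D(P_{i,j})$ with $\ketbra{\phi_j}=D(P_{i,j})\ketbra{\phi_i}D(P_{i,j})^{-1}$. Because the same procedure must apply to every matroid on these elements, $D(P_{i,j})$ has to identify ${\mathcal H}_i$ with ${\mathcal H}_j$. Following \cite{Ionicioiu:2012}, I read this as: all the ${\mathcal H}_i$ are copies of a single ${\mathcal H}_1$, and $D(P_{i,j})$ is the canonical identification. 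Hence ${\mathcal H}={\mathcal H}_1^{\otimes n}$.

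\emph{Step 3: identical local states.} Once the identification is canonical, the relation from Step~2 gives $\ketbra{\phi_i}=\ketbra{\phi_j}$. So all $\ket{\phi_i}$ are equal to one vector $\ket{\phi}$ up to a global phase, which can be absorbed, and therefore $\ket{M_n^{\emptyset}}=\ket{\phi}^{\otimes n}$. A cleaner route to the same conclusion is Corollary~\ref{coro3}: every permutation of $X$ is an automorphism of $M_n^{\emptyset}$, so $\rho$ commutes with all qudit permutations $D(P)$. A product state with this property must have all factors equal up to phase. For a general matroid $M$ on $X$, the state $\ket{M}$ lives in the same space, because the $n$ local systems are fixed by the ground set and do not depend on ${\mathcal I}$.

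The main obstacle is Step~2. The axioms do not literally say that $D(P)$ acts by permuting tensor factors of a single common space, so some interpretive convention is needed. I will adopt the one used in \cite{Ionicioiu:2012}, namely that $D(P)$ is the permutation representation on ${\mathcal H}_1^{\otimes n}$, and justify it from the requirement that the construction be independent of how the ground set is labelled.
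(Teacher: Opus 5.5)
Your proposal is correct and is essentially the paper's argument: the paper writes no separate proof here and, through the analogous Proposition~\ref{prop1c}, defers to Proposition~1 of \cite{Ionicioiu:2012}, which is exactly your ``cleaner route'' (the product form of the empty-matroid state, together with invariance of $\rho$ under all permutations of $X$ via Corollary~\ref{coro3}, with $D(P)$ read as the qudit permutation on ${\mathcal H}_1^{\otimes n}$). The only ingredient specific to the independent-set setting is your Step~1 check that $M_n^{\emptyset}=(X,\{\emptyset\})$ really is $M^{(1)}\uplus\cdots\uplus M^{(n)}$, where each partial union is again an empty matroid so that Axiom~\ref{axiom1s} may be iterated; the paper takes this for granted in Corollary~\ref{coro1}.
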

   
We next state an analogue of Axiom~\ref{axiom3c} for matroids derived from independent sets.

\begin{axiom}\label{axiom3}
    \emph{Universal independent set operator}.
    If the matroids $M_1=(X , {\mathcal I}_1 )$ and $M_2=(X , {\mathcal I}_2 )$ differ by a single independent set, i.e., ${\mathcal I}_2 = {\mathcal I}_1 \cup  \{I\}$, then $\ket{M_2} = U_{I} \ket{M_1}$. The independent set operator $U_{I}$ does not depend on either $M_1$ or $M_2$; it is uniquely determined by the independent set $I$. If $M_2$ is obtained from $M_1$ by adding, at the same step, a minimal finite number of independent sets $[I]=\{I_{i_1}, \ldots, I_{i_t}\}$ satisfying Definition~\ref{matro}, e.g., ${\mathcal I}_2 = {\mathcal I}_1 \cup  \{[I]\}$, then $\ket{M_2} = U_{[I]} \ket{M_1}$. For each independent set $I_{i_j}$ contained in $[I]$, $j=1, \ldots, t$, there exists an operator $U_{I_{i_j}}$ that does not depend on neither $M_1$ nor $M_2$ and is uniquely determined by $I_{i_j}$, for all $j=1, \ldots, t$. 
\end{axiom}

To guarantee the consistency of Eq.~(\ref{quantumstate}), the independent set operators must satisfy the following three conditions. The first one asserts that the operators act nontrivially on the Hilbert spaces associated with elements of independent sets.

\begin{condition}\label{C1}\emph{Locality}.
    Let $M=(X, {\mathcal I})$ be a matroid and $I^{*} = \{ a_1 , a_2 , \ldots , a_t \} \in {\mathcal I}$. Then the independent set operator $U_{I^{*}}$ acts nontrivially only on the Hilbert spaces corresponding to the elements belonging to $I^{*}$, i.e., on the space ${ \mathcal H}_{a_1} \otimes { \mathcal H}_{a_2} \otimes \ldots \otimes { \mathcal H}_{a_t}$, and acts as identity on the remaining space, that is, $U_{I^{*}}=U_{\{a_1, \ldots,a_t\}}\otimes I^{\otimes (n-t)}$.
\end{condition}

\begin{remark}
    It is worth mentioning that, by Proposition~\ref{prop1}, the quantum systems ${\mathcal H}_{a_i}$ with $i=1, 2, \ldots, n$ are equal. In Condition~\ref{C1} given previously, we utilize the notation ${ \mathcal H}_{a_i}$ to denote that an operator acts on the quantum system corresponding to the element $a_i$.
\end{remark}

\begin{condition}\label{C2}\emph{Symmetry}. 
    Let $M= (X, {\mathcal I})$ be a matroid and $I \in { \mathcal I}$. The independent set operator is symmetric in the inputs, i.e., $U_{I}=U_{I^{*}}$, where $I^{*}$ is a reordering of the elements of $I$.
\end{condition}

The commutativity condition is fundamental to allow the application of operators in any order.

\begin{condition}\label{C3}\emph{Independent Set Commutativity}.
    Let $M=(X, {\mathcal I})$ be a matroid and assume that two independent sets $I_1$ and $I_2$ have common elements. Then the corresponding operators $U_{I_1}$ and $U_{I_2}$ commute.
\end{condition}

An important advantage of considering matroid states instead of graph or hypergraph states is that all the Hilbert spaces where the operators will act nontrivially are completely determined by knowing the bases of the underlying matroid. For example, if the ground set of a matroid (or the set of edges (hyperedges) of a graph (hypergraph)) is large, then we must assign an operator to each edge (hyperedge). In the case of a matroid $M$, even with a large ground set, it is only necessary to have the bases of $M$ to know all sets of Hilbert spaces in which the operators act nontrivially. This fact provides a much more efficient process for assigning the corresponding operators.

\begin{proposition}\label{prop2}
    Let $M=(X, {\mathcal I})$ be a matroid. Then, the subsets of $X$ under which the operators will be applied are completely determined by the collection of bases of $M$.
\end{proposition}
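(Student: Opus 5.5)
By Condition~\ref{C1}, each independent set operator $U_I$ acts nontrivially exactly on the Hilbert spaces $\mathcal{H}_a$ with $a \in I$. The subsets of $X$ on which operators are applied are therefore precisely the members of $\mathcal{I}$. The plan is to show that the collection of bases $\mathcal{B}$ recovers $\mathcal{I}$ completely. Then knowing $\mathcal{B}$ is equivalent to knowing all the supports of the operators appearing in the construction of $\ket{M}$ via Axiom~\ref{axiom3}.

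First I will check that every independent set lies inside some basis. Since $X$ is finite, any $I \in \mathcal{I}$ can be enlarged one element at a time while staying independent. The process stops at a maximal independent set $B \supseteq I$, which is a basis by definition. Alternatively, (I3) can be used directly: take any basis $B_0$; if $|I|<|B_0|$, (I3) supplies $e \in B_0\setminus I$ with $I\cup\{e\}\in\mathcal{I}$, and we iterate until the cardinality reaches the rank.

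Conversely, by (I2) every subset of a basis is independent. Together these give the identity
\begin{equation*}
\mathcal{I} = \{ A \subseteq X : A \subseteq B \text{ for some } B \in \mathcal{B} \},
\end{equation*}
so $\mathcal{I}$ is the down-closure of $\mathcal{B}$. Hence the family of supports $\{ I : I \in \mathcal{I}\}$ of the operators $U_I$, and the corresponding tensor factors $\mathcal{H}_{a_1}\otimes\cdots\otimes\mathcal{H}_{a_t}$ from Condition~\ref{C1}, is computable from $\mathcal{B}$ alone.

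The only point needing care is interpretive rather than technical. I must make precise that ``the subsets under which the operators will be applied'' means exactly the index sets of the $U_I$ with $I\in\mathcal{I}$, including those introduced in groups $[I]$ in Axiom~\ref{axiom3}. Since each member of $[I]$ is itself an independent set of the final matroid, this reduces to the identity above, and no further obstacle arises.
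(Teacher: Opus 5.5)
Your proposal is correct and takes the same approach as the paper: operators are attached to independent sets, and (I2) guarantees that every subset of a basis carries an operator, so the supports form the down-closure of $\mathcal{B}$. You also prove the converse inclusion, that every independent set extends to a basis by finiteness, which the paper only asserts (``since the bases define all independent sets''), so your version is slightly more complete.
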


\begin{proof}
    Assume that ${\mathcal B}$ is the set of bases of $M$. From Axiom~\ref{axiom3}, for any $B \in {\mathcal B}$, there exists a unique independent set operator $U_{B}$ determined by $B$.
    From Item~${\bf (I2)}$ of Definition~\ref{matro}, all subsets of $B$ are also independent, which implies that they have a corresponding independent set operator. In other words, since the bases define all independent sets, it follows that the subsets of $X$ under which all the operators are applied are completely determined from the collection of bases of $M$.
\end{proof}

\begin{proposition}\label{operatorproperty}
    Let $M=(X, {\mathcal I})$ be a matroid of rank $k$. Then, for each pair of operators $U_{I_i}$ and $U_{I_j}$, where $1 \leq i, j \leq k$, with $|I_j| >1$, there exists at least one operator that acts nontrivially on a common Hilbert space on which $U_{I_i}$ or $U_{I_j}$ acts nontrivially.
\end{proposition}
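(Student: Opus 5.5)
The statement is loosely phrased, so I first fix a reading. The indices $i,j$ range over $1,\ldots,k$ and $k$ is the rank, so $I_i$ and $I_j$ should be read as independent sets of $M$; each has size at most $k$, and we assume $|I_j|>1$. I will interpret the conclusion as follows: there exists an independent set operator $U_{I}$, different from $U_{I_j}$, whose support (in the sense of Condition~\ref{C1}) meets the support of $U_{I_i}$ or of $U_{I_j}$. The plan is to produce such an $I$ explicitly from the hereditary axiom \textbf{(I2)}, with Axiom~\ref{axiom3} supplying the operator.

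\emph{Step 1 (a proper nonempty subset of $I_j$).} Since $|I_j|>1$, choose an element $a\in I_j$ and set $I=\{a\}$. Because $|I_j|\ge 2$, the set $I$ is a nonempty proper subset of $I_j$. By Item~\textbf{(I2)} of Definition~\ref{matro}, $I$ is independent. Axiom~\ref{axiom3}, used in the same spirit as the proof of Proposition~\ref{prop2}, then gives an independent set operator $U_{I}$ that is uniquely determined by $I$.

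\emph{Step 2 (locality).} By Condition~\ref{C1}, $U_{I}$ acts nontrivially only on $\mathcal{H}_a$. Since $a\in I_j$, the space $\mathcal{H}_a$ is one of the spaces on which $U_{I_j}$ acts nontrivially. So $U_I$ and $U_{I_j}$ share a common Hilbert space, which is the claimed conclusion. Because $I\neq I_j$ as sets, uniqueness in Axiom~\ref{axiom3} says that $U_I$ is a genuinely different operator, indexed by a different independent set.

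\emph{Optional strengthening (touching both $U_{I_i}$ and $U_{I_j}$).} When $I_i\cup I_j$ is not itself independent, one can try \textbf{(I3)}. Apply it to $I_i$ and $I_j$ (in whichever order makes the cardinality hypothesis hold) to obtain an element $e$ such that $I_i\cup\{e\}$ or $I_j\cup\{e\}$ is independent. This yields an operator whose support meets both supports. If $|I_i|=|I_j|$, first shrink one of them using \textbf{(I2)} and then augment. This refinement is not needed for the statement as worded, so I would mention it only as a remark.

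\emph{Main obstacle.} The mathematics is a direct consequence of \textbf{(I2)} and Condition~\ref{C1}. The real difficulty is interpretive: the statement must be pinned down so that it is not vacuous. The excluded trivial case is taking the witness to be $U_{I_j}$ itself, and this is exactly why the hypothesis $|I_j|>1$ is there: it guarantees a nonempty proper subset of $I_j$ exists. I would state this reading explicitly before giving the two-line argument of Steps 1 and 2.
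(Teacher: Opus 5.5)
Your argument is correct for the statement as worded, but your main route is not the paper's. The paper's proof is essentially your ``optional strengthening,'' and it splits by cardinality. If $|I_j|>|I_i|$, \textbf{(I3)} gives $a\in I_j\setminus I_i$ with $I_i\cup\{a\}\in\mathcal{I}$, and the witness is $U_{I_i\cup\{a\}}$, which shares $\mathcal{H}_a$ with $U_{I_j}$. If $|I_i|=|I_j|>1$, the paper shrinks $I_i$ to some $\{b\}$ by \textbf{(I2)} and augments from $I_j$ to get $\{b,c\}\in\mathcal{I}$ with $c\in I_j\setminus\{b\}$. There the hypothesis $|I_j|>1$ is what makes \textbf{(I3)} applicable in the equal-size case; in your proof it only guarantees a singleton $\{a\}\subsetneq I_j$.

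The two routes trade off as follows. Your \textbf{(I2)}-only witness $U_{\{a\}}$ is always indexed by a set different from $I_j$. The paper's Case~1 witness, by contrast, degenerates to $U_{I_j}$ itself when $I_i\subset I_j$ and $|I_j|=|I_i|+1$. On the other hand, your witness ignores $I_i$ entirely, which shows that the literal ``or'' statement is nearly trivial. The paper's witnesses $U_{I_i\cup\{a\}}$ and $U_{\{b,c\}}$ meet the supports of \emph{both} $U_{I_i}$ and $U_{I_j}$. That linking genuinely needs \textbf{(I3)}, and it is the ``mixing'' invoked in the remark after Corollary~\ref{disjointop}, so it is worth making the strengthening your main argument.

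One small correction: uniqueness in Axiom~\ref{axiom3} does not make $I\mapsto U_I$ injective. If you want $U_{\{a\}}\neq U_{I_j}$ as operators, argue from their supports. That step, like the paper's, tacitly assumes $U_I$ acts nontrivially on every factor indexed by $I$. Condition~\ref{C1} is an ``only on'' condition, so it does not by itself ensure this.
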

\begin{proof}
    Suppose that $I_i$ and $I_j$ are independent sets in which the operators $U_{I_i}$ and $U_{I_j}$ are assigned, respectively.\\ 
    Case 1: $|I_j| \neq |I_i|$ . Assume without loss of generality that $|I_j| > |I_i|$ (if $|I_i| > |I_j|$, then $|I_i| >1$, and the procedure adopted is the same). Then it follows from Item~${ \bf (I3)}$ of Definition~\ref{matro} that there exists an element $a \in I_j \setminus I_i $ such that $I_i \cup \{a\} \in { \mathcal I}$. From Axiom~\ref{axiom3} and Condition~\ref{C1}, there exists an operator $U_{I_i \cup \{a\}}$ that acts nontrivially on the Hilbert space associated with $I_i \cup \{a\}$. Therefore, $U_{I_j}$ and $U_{I_i \cup \{a\}}$ both act nontrivially on the Hilbert space corresponding to $a$.\\
    Case 2: $|I_j| = |I_i|$. Assume that $|I_j| = |I_i| >1$. Let $b$ be an element of $I_i$; from Item~${ \bf (I2)}$ of Definition~\ref{matro}, $\{b\} \in {\mathcal I}$. Applying Case 1 for the independent sets $\{b\}$ and $I_j$, there exists an element $c \in I_j \setminus \{b\}$ such that $\{b, c\} \in { \mathcal I} $. The operators $U_{I_j}$ and $U_{\{b, c\}}$ act nontrivially on the Hilbert space corresponding to $c$, and the proof is complete.
\end{proof}

\begin{corollary}\label{disjointop}
    Let $M=(X, {\mathcal I})$ be a matroid such that ${\mathcal I}$ has at least one independent set with more than one element. Then the set of operators is not disjoint, i.e., there exists at least a pair of operators acting nontrivially on the same set of Hilbert spaces.
\end{corollary}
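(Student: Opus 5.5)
The plan is to read this off the hereditary axiom (I2) together with Axiom~\ref{axiom3} and Condition~\ref{C1}. By hypothesis there is an independent set $I\in{\mathcal I}$ with $|I|\geq 2$; fix two distinct elements $a,b\in I$. By (I2) of Definition~\ref{matro}, the sets $\{a\}$, $\{b\}$, $\{a,b\}$ (and $I$ itself) are all independent. Axiom~\ref{axiom3} then assigns an independent set operator to each of them, uniquely determined by the set, and by Condition~\ref{C1} each acts nontrivially exactly on the Hilbert spaces of its elements.

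The key step is to exhibit two distinct operators whose nontrivial supports meet. Take $U_{\{a\}}$ and $U_{\{a,b\}}$. The first acts nontrivially on ${\mathcal H}_a$ alone; the second acts nontrivially on ${\mathcal H}_a\otimes{\mathcal H}_b$. Both therefore act nontrivially on ${\mathcal H}_a$, so the family of operators is not pairwise disjoint in support. If a statement with identical supports is intended (``the same set of Hilbert spaces''), I would instead use that $\{a,b\}\subseteq I$ and, when $|I|=2$, note $I=\{a,b\}$; more robustly, I would read ``same set'' as ``a common Hilbert space'', which is what Proposition~\ref{operatorproperty} provides. Indeed one may alternatively just invoke Proposition~\ref{operatorproperty} with $I_j=I$ (so $|I_j|>1$) and $I_i=\{a\}$: Case 1 there produces $U_{I_i\cup\{c\}}$ sharing the space of $c$ with $U_{I}$.

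The only real obstacle is interpretive: the corollary's phrase ``acting nontrivially on the same set of Hilbert spaces'' must be read as sharing at least one tensor factor, since two distinct independent sets never have identical support under Condition~\ref{C1}. I will state this reading explicitly at the start, and then the argument above (a two-line consequence of (I2), Axiom~\ref{axiom3} and Condition~\ref{C1}, or of Proposition~\ref{operatorproperty}) completes the proof.
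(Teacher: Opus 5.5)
Your proof is correct, and your main argument is slightly more elementary than the paper's. The paper's whole proof is the remark that the corollary is immediate from Proposition~\ref{operatorproperty}, which relies on the augmentation axiom \textbf{(I3)} and a two-case analysis.

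You instead get the overlap directly from the hereditary axiom \textbf{(I2)}. Any $I$ with $|I|\geq 2$ contains $a\neq b$, and $U_{\{a\}}$ and $U_{\{a,b\}}$ both act nontrivially on ${\mathcal H}_a$. This uses the paper's own reading of Condition~\ref{C1}, the same reading the proof of Proposition~\ref{operatorproperty} relies on.

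Your route gains two things. It does not need \textbf{(I3)}. It also always yields two operators attached to \emph{distinct} independent sets. The route through Proposition~\ref{operatorproperty} can degenerate. Your own alternative shows this: with $I_i=\{a\}$, $I_j=I$ and $|I|=2$, Case~1 gives $I_i\cup\{c\}=I_j$, so the ``pair'' is $U_I$ with itself. The paper's route, in exchange, is a one-line citation.

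Your interpretive remark is also right. Distinct independent sets have distinct supports under Condition~\ref{C1}, so ``the same set of Hilbert spaces'' can only mean a shared tensor factor. That shared-factor version is all the paper's proof establishes as well.
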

\begin{proof}
    The proof is immediate from Proposition~\ref{operatorproperty}.
\end{proof}

\begin{remark}
    Note that Proposition~\ref{operatorproperty} and Corollary~\ref{disjointop} suggest that in the case where independent sets are considered to construct matroid states, there is a large possibility of generating entangled matroid states, since independent set operators ``mix" these quantum states. In other words, Items ($\bf{I2}$) \ and \ ($\bf{I3}$) of Definition~\ref{matro} are suitable for creating entangled matroid states if we consider the $\CZ$ gate or the generalized $\CCZ{|E|-1}$ gate.
\end{remark}

\section{Matroid States and Stabilizer Formalism}
\label{sec:msstab}

In this section, we review some concepts of stabilizer groups relevant to the construction of graph states. For more details, we refer the reader to \cite{Hein:2006}; see also \cite{Gottesman:1996,Gottesman:1998,Calderbank:1998,Nielsen:2010,Ketkar:2006} for a complete presentation of the stabilizer formalism.

Recall that the Pauli group on $1$-qubit is the set $G_1 =\{ \pm I, \pm iI, \pm \sigma_x, \pm i\sigma_x, \pm \sigma_y, \pm i\sigma_y, \pm \sigma_z, \pm i\sigma_z\}$ under the operation of matrix multiplication, where $I$, $\sigma_x$, $\sigma_y$, and $\sigma_z$ are the Pauli operators shown in Eq.~(\ref{Paulis}). In the same line, the general Pauli group $G_n$ on $n$ qubits consists of all $n$-fold tensor products of Pauli matrices, together with the multiplicative factors $\pm 1$ and $\pm i$.

Suppose that $S$ is a subgroup of $G_n$; define $V_S$ as the set of $n$ qubit states that are fixed by every element of $S$. Then $V_S$ is a vector space, called the \emph{vector space stabilized by} $S$. The subgroup $S$ is said to be the \emph{stabilizer} of the vector space $V_S$.

The following result is a powerful tool for constructing (quantum) stabilizer codes.

\begin{proposition}\cite[Proposition 10.5]{Nielsen:2010}
    Let $S= \langle g_1, \ldots, g_{n-k}\rangle \subset G_n$ be the subgroup generated by $n-k$ independent and commuting elements from $G_n$ such that $-I \notin S$. Then $V_S$ is a $2^k$-dimensional vector space.
\end{proposition}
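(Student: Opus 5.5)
We will prove the standard fact that if $S=\langle g_1,\ldots,g_{n-k}\rangle\subset G_n$ is generated by $n-k$ independent, commuting Pauli elements with $-I\notin S$, then $\dim V_S=2^k$. The approach is to express the projector onto $V_S$ as an average over the group and compute its trace.

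First, we record elementary properties of $S$. Since the generators commute, $S$ is abelian. Every element of $G_n$ squares to $\pm I$. If some $g\in S$ had $g^2=-I$, then $-I\in S$, a contradiction. Hence every $g\in S$ is Hermitian with $g^2=I$, so its eigenvalues are $\pm1$. This also rules out $\pm iI\in S$, since $(iI)^2=-I$. Combined with $-I\notin S$, the only scalar multiple of $I$ in $S$ is $I$ itself.

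Next we count $S$. Because the generators commute and square to $I$, every element has the form $g_1^{x_1}\cdots g_{n-k}^{x_{n-k}}$ with $x\in\{0,1\}^{n-k}$. Independence means no generator lies in the group generated by the others. This makes the map $x\mapsto \prod_i g_i^{x_i}$ injective: if two exponent vectors gave the same element, their quotient would express some generator through the others. Therefore $|S|=2^{n-k}$.

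Now define
\[
P=\frac{1}{|S|}\sum_{g\in S} g .
\]
Since $S$ is a group, $hP=P$ for every $h\in S$, so $P^2=P$. Each $g$ is Hermitian, so $P^\dagger=P$, and $P$ is an orthogonal projector. Its range is exactly $V_S$. Indeed, if $\ket{\psi}\in V_S$, then $P\ket{\psi}=\ket{\psi}$. Conversely, $hP=P$ shows that every vector in the range of $P$ is fixed by all $h\in S$.

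It follows that $\dim V_S=\operatorname{tr}P$. The identity term contributes $\operatorname{tr} I = 2^n$. Every non-identity $g\in S$ is $\pm$ a nontrivial tensor product of Pauli matrices, because the phases $\pm i$ were excluded above and $-I\notin S$. Such a $g$ has trace $0$, since each of $\sigma_x,\sigma_y,\sigma_z$ is traceless. Therefore
\[
\operatorname{tr}P=\frac{2^n}{2^{n-k}}=2^k .
\]

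The main obstacle is the careful bookkeeping of phases: we must show that no non-identity element of $S$ is proportional to $I$, and that every element squares to $I$. This is exactly where the hypothesis $-I\notin S$ is used. The counting step $|S|=2^{n-k}$ also relies on the precise meaning of ``independent'' generators, which we take in the standard sense that no generator can be removed without shrinking the group.
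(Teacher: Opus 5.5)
Your proof is correct and complete. The paper gives no proof of this statement; it imports it from \cite{Nielsen:2010}, where the argument is organized differently.

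\textbf{The cited argument.} Nielsen and Chuang define, for every $x\in\{0,1\}^{n-k}$, the projector $P_S^x=\prod_{j=1}^{n-k}\bigl(I+(-1)^{x_j}g_j\bigr)/2$. They use the check-matrix characterization of independence (rows linearly independent over $\mathbb{Z}_2$, given $-I\notin S$) to produce, for each $x$, an element $g_x\in G_n$ with $g_x g_j g_x^{\dagger}=(-1)^{x_j}g_j$ for all $j$. Hence $P_S^x=g_xP_S^{0}g_x^{\dagger}$, so every $P_S^x$ has the same rank as $V_S$. These $2^{n-k}$ projectors are pairwise orthogonal and sum to $I$, so each has rank $2^n/2^{n-k}=2^k$.

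\textbf{Comparison.} Your projector is the same operator as $P_S^{0}$: expanding the product and using your injectivity of $x\mapsto\prod_i g_i^{x_i}$ gives exactly $\frac{1}{|S|}\sum_{g\in S}g$. What differs is how the rank is obtained. Your trace computation needs only $|S|=2^{n-k}$ and the tracelessness of non-identity Pauli products. It therefore bypasses the check matrix and the symplectic argument for the existence of $g_x$, which makes it shorter and more self-contained. The Nielsen--Chuang route uses more machinery but exhibits the Pauli conjugations that permute the syndrome subspaces, a structure they reuse for error correction. The same trace argument also gives $\operatorname{tr}P_S^x=2^k$ for every $x$, so you still recover the equal-dimension decomposition.
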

 
In particular, if we consider only one quantum state $\ket{\psi}$, we can ask if there exists a set of stabilizers that stabilizes $\ket{\psi}$. In this context, it was shown that the graph states are properly stabilized by the tensor product of Pauli operators, as the next result shows.

\begin{proposition}\cite[Proposition 2]{Hein:2006}\label{Heinstabilizer}
    Let $G=(V, E)$ be a graph. A graph state vector $\ket{G}$ is the unique, common eigenvector in 
     $({ \mathbb C}^2)^V $ to the set of independent commuting observables: 
     $$K_a= (\sigma_x)_{a} (\sigma_z)_{N_a}:= (\sigma_x)_{a} \prod_{b \in N_a}(\sigma_z)_{b},$$
     where the eigenvalues for the correlation operators $K_a$ are equal to $+1$ for all $a \in V$ and $N_a$ denotes the neighborhood of vertex $a$, i.e., $N_a =\{b \in V | \ \{a, b\} \in E\}$. The abelian subgroup $S$ of the local Pauli group $G_{|V|}$ generated by the set $\{K_a | \ a \in V\}$ is said to be the stabilizer of the graph state.
\end{proposition}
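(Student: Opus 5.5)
We prove the statement as follows. Write $\ket{G}=\prod_{\{a,b\}\in E}\CZ_{ab}\ket{+}^{\otimes|V|}$, which is how the paper builds the graph state (Proposition~\ref{matroidgenerator} and Corollary~\ref{graphmatro}). The idea is to conjugate the single-qubit stabilizers of $\ket{+}^{\otimes|V|}$ through the $\CZ$ gates.

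\emph{Step 1 (the initial state).} The product state $\ket{+}^{\otimes|V|}$ is stabilized by $(\sigma_x)_a$ for every $a\in V$. These $|V|$ operators are independent and commute, and $-I$ is not in the group they generate. By \cite[Proposition 10.5]{Nielsen:2010} with $k=0$, the stabilized space is one-dimensional. So $\ket{+}^{\otimes|V|}$ is the unique common $+1$ eigenvector of $\{(\sigma_x)_a\}_{a\in V}$.

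\emph{Step 2 (conjugation through the gates).} Let $U=\prod_{e\in E}\CZ_e$. All factors are diagonal, so they commute and the order does not matter. Then $U(\sigma_x)_aU^{\dagger}$ is a stabilizer of $\ket{G}$. To compute it, use the two-qubit identities
\[
\CZ\,(\sigma_x\otimes I)\,\CZ=\sigma_x\otimes\sigma_z,\qquad \CZ\,(I\otimes\sigma_x)\,\CZ=\sigma_z\otimes\sigma_x,
\]
together with the fact that $\CZ_{bc}$ commutes with $(\sigma_x)_a$ whenever $a\notin\{b,c\}$ and also commutes with every $\sigma_z$.
\begin{itemize}
\item Conjugating $(\sigma_x)_a$ by the gates $\CZ_{ab}$ with $b\in N_a$ attaches one factor $(\sigma_z)_b$ for each neighbour $b$.
\item All other gates leave the operator unchanged, including the extra $\sigma_z$ factors already attached.
\end{itemize}
The result is $U(\sigma_x)_aU^\dagger=K_a$.

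\emph{Step 3 (transfer of properties).} Conjugation by a unitary is a group isomorphism. Hence the $K_a$ commute, they are independent, and the group they generate is abelian and does not contain $-I$. It also follows that $K_a\ket{G}=\ket{G}$ for every $a$.

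\emph{Step 4 (uniqueness).} Suppose $\ket{\psi}$ is a common $+1$ eigenvector of all the $K_a$. Then $U^\dagger\ket{\psi}$ is a common $+1$ eigenvector of all the $(\sigma_x)_a$. By Step 1 it is proportional to $\ket{+}^{\otimes|V|}$, so $\ket{\psi}$ is proportional to $\ket{G}$. Alternatively, Proposition 10.5 of \cite{Nielsen:2010} with $n-k=|V|$ gives a one-dimensional stabilized space directly.

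The main obstacle is the bookkeeping in Step 2. One must check that the $\sigma_z$ factors produced by one gate are not altered by the remaining gates. This holds because every $\CZ$ is diagonal and therefore commutes with $\sigma_z$. One must also check that $\CZ_{ab}$ acts symmetrically on $a$ and $b$, which is exactly the Symmetry condition. Everything else is routine.
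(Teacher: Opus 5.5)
The paper gives no proof of this statement. It quotes the result from \cite{Hein:2006}, and your argument is the standard proof of it. You conjugate the stabilizers $(\sigma_x)_a$ of $\ket{+}^{\otimes|V|}$ through $U=\prod_{e\in E}\CZ_e$ to obtain $K_a$. You then carry commutativity, independence and $-I\notin S$ across the isomorphism $X\mapsto UXU^{\dagger}$, and pull uniqueness (up to a scalar, for a simple graph) back through $U^{\dagger}$. The proof is correct.

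One small correction: the swap-symmetry of $\CZ=\diag(1,1,1,-1)$ is a direct property of that matrix. It is not supplied by the paper's Symmetry condition, which is only a requirement imposed on abstract circuit operators. You already state both conjugation identities explicitly, so that appeal can simply be dropped.
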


In the following result, we present a method for obtaining a graph state from its stabilizer group, which is constructed from a finite set of matroid states.

\begin{proposition}\label{correspgraphmatro}
     Let $G=(V, E)$ be a graph with $n$ vertices. Then there exist \ $t+1$ matroid states whose corresponding operators are the generators of the stabilizer subgroup of the graph state $\ket{G}$, where $t$ is the number of incident vertices of $G$.
\end{proposition}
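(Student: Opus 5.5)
The plan is to realize each stabilizer generator $K_a=(\sigma_x)_a\prod_{b\in N_a}(\sigma_z)_b$ from Proposition~\ref{Heinstabilizer} as the operator attached to a single circuit of a suitable matroid on $V$. Here ``incident vertices'' is read as the vertices $a_1,\ldots,a_t$ with $N_{a_i}\neq\emptyset$. The remaining $n-t$ isolated vertices $a$ have $K_a=(\sigma_x)_a$, which acts on one qubit only, and these will be gathered into one extra matroid; that extra matroid accounts for the ``$+1$'' in $t+1$.

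\emph{Step 1.} For each incident vertex $a_i$, set $C_i=\{a_i\}\cup N_{a_i}$. Define $M_i=(V,\{C_i\})$, a matroid whose circuit collection consists of this one circuit. Conditions (C1) and (C2) are immediate, and (C3) holds vacuously because there are no two distinct circuits. Theorem~\ref{caracircuits} then shows that $M_i$ is a matroid with $\mathcal{C}(M_i)=\{C_i\}$.

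\emph{Step 2.} To the circuit $C_i$ assign $U_{C_i}=K_{a_i}$, i.e. $\sigma_x$ on $a_i$, $\sigma_z$ on each $b\in N_{a_i}$, and the identity elsewhere. Condition~\ref{C1c} holds because the support of $U_{C_i}$ is exactly $C_i$. Condition~\ref{C3c} is trivial inside $M_i$, since it has only one circuit. Axiom~\ref{axiom3c} applied to the empty matroid $M_n^{\emptyset}$ then gives $\ket{M_i}=K_{a_i}\ket{\phi}^{\otimes n}$. Remark~\ref{multioperator} explicitly permits circuit operators that are not all equal, so assigning different $K_{a_i}$ is legitimate.

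\emph{Step 3.} For the isolated vertices, let $M_0=(V,\{\{a\}:a \text{ isolated}\})$. Its circuits are pairwise disjoint singletons, so (C3) again holds vacuously and $M_0$ is a matroid. Assign $U_{\{a\}}=(\sigma_x)_a=K_a$. These operators commute, so the conditions hold, and $\ket{M_0}=\prod_a (\sigma_x)_a\ket{\phi}^{\otimes n}$. If $G$ has no isolated vertices, $M_0$ is taken to be the empty matroid, whose operator set is empty; the count $t+1$ is then still an upper bound. The operators collected from $M_0,M_1,\ldots,M_t$ are exactly $\{K_a : a\in V\}$, which by Proposition~\ref{Heinstabilizer} generate the stabilizer $S$ of $\ket{G}$.

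\emph{Main obstacle.} The delicate point is Condition~\ref{C2c}: $K_{a_i}$ is not symmetric under permutations of $C_i$, since $a_i$ is singled out to carry $\sigma_x$. I would handle this by noting that Condition~\ref{C2c} only requires the operator to be invariant under relabeling the elements of $C_i$. The assignment is defined with respect to the fixed labels, namely ``$\sigma_x$ on the vertex $a_i$, $\sigma_z$ on the others,'' so reordering the list of elements of $C_i$ does not change the operator. A second potential obstruction is that we cannot merge all the $K_a$ into one matroid: the supports $C_i$ may violate (C2) or (C3). That is precisely why separate matroids are used, one per incident vertex.
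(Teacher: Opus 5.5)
The gap is at the point you flagged yourself: Condition~\ref{C2c}. Your workaround reads ``$U_C=U_{C^{*}}$'' as a statement about relabeling a list. But $C$ is a set, so under that reading every operator satisfies the condition, and it would be vacuous. ``Symmetric in the inputs'' means invariance under permuting the tensor factors the operator acts on. That is the standard meaning in the graph-state axiomatics: $\CZ$ has this property and a CNOT lacks it. The paper relies on this meaning; for instance, each operator in Example~\ref{distinctoperators} is a tensor power $W^{\otimes 3}$ of a single-qubit operator. Restricted to $C_i=\{a_i\}\cup N_{a_i}$, your $U_{C_i}=K_{a_i}$ is $\sigma_x\otimes\sigma_z^{\otimes |N_{a_i}|}$. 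Swapping $a_i$ with any neighbor $b$ yields $\sigma_z$ on $a_i$ and $\sigma_x$ on $b$, a different operator. So $M_1,\ldots,M_t$ with these circuit operators are not admissible matroid states, and the proof as written does not go through.

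The missing idea is to split each generator into two pieces that are each permutation-symmetric, $K_a=(\sigma_x)_a\cdot\prod_{b\in N_a}(\sigma_z)_b$, and to place the pieces in different matroids. This is what the paper does.
\begin{itemize}
\item For each incident vertex $a$, take $M_a=(V,\{N_a\})$, whose single circuit is the \emph{open} neighborhood $N_a$ (nonempty because $a$ is incident). Assign $U_{N_a}=\prod_{b\in N_a}(\sigma_z)_b$, which is invariant under permutations of $N_a$.
\item Add one more matroid, $M_{V_G}=(V,\{\{a_1\},\ldots,\{a_n\}\})$, containing \emph{all} $n$ singletons, not only those of isolated vertices. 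Assign $U_{\{a\}}=(\sigma_x)_a$, which is trivially symmetric; (C3) holds vacuously since the singletons are disjoint.
\end{itemize}
Then $K_a=U_{\{a\}}U_{N_a}$ for incident $a$, and $K_a=U_{\{a\}}$ for isolated $a$. This gives exactly $t+1$ matroid states, with no need to pad with an empty matroid when $G$ has no isolated vertices.

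The price is that an incident vertex's generator is no longer a single circuit operator. It is a composite of operators from two different matroid states. The paper's proof explicitly reads the statement in that relaxed sense (``operators or the composite of operators'').
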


\begin{proof}
    Let $G=(V, E)$ be the underlying graph with $V=\{a_1 , \ldots , a_n\}$, which corresponds to the graph state $\ket{G}$ on $n$-qubits. Let ${\operatorname{Inc}}_G=\{a_{i_1}, a_{i_2}, \ldots, a_{i_t}\}$ be the set of incident vertices of $G$, i.e., the set of vertices that are incident to some edge of $G$. For each $a_{i_j} \in {\operatorname{Inc}}_G$, consider the corresponding neighborhood $N_{a_{i_j}}$, $j = 1, \ldots , t$, and for each neighborhood $N_{a_{i_j}}$ of $a_{i_j}$, consider the corresponding matroid $M_{a_{i_j}}= (V, {\mathcal C}=\{N_{a_{i_j}}\})$, with a unique circuit $N_{a_{i_j}}$.
     
     By Axiom~\ref{axiom3c} and Condition~\ref{C1c}, we assign to each circuit $N_{a_{i_j}}$ a unique operator (in the corresponding matroid state) that acts nontrivially on the qubits corresponding to $N_{a_{i_j}}$ of the form 
     \begin{align*}
     U_{N_{a_{i_j}}}= {}  &\left[\displaystyle\prod_{ b \in N_{a_{i_j}}}^{} (\sigma_z)_{b}\right] \nonumber 
     \otimes I^{\otimes (n - |N_{a_{i_j}}|)},
     \end{align*}
     for all $j = 1, \ldots , t$. 
     
     We next define the matroid $M_{V_G}=(V, { \mathcal C}_{M_{V_G}})$, where $V=\{a_1 , \ldots , a_n\}$ and $ { \mathcal C}_{M_{V_G}}= \{ \{a_1\}, \{a_{2}\}, \ldots , \{a_n\}\}$. From Axiom~\ref{axiom3c} and Condition~\ref{C1c}, we assign to each circuit $\{a_i\}$, $i=1 , \ldots , n$, the operator $U_{a_{i}}$, in the corresponding matroid state, that acts as Pauli $\sigma_x$ in the qubit corresponding to $a_{i}$ and as identity on the remaining qubits, i.e., $U_{a_{i}} = (\sigma_x)_{a_i}\otimes I^{\otimes (n-1)}$, for all $i=1 , \ldots , n$. We know that for all vertices $a_{i_1}, a_{i_2}, \ldots , a_{i_t}$ in ${\operatorname{Inc}}_G$ we have 
     $$K_{a_{i_j}}= U_{a_{i_j}}\circ U_{N_{a_{i_j}}}.$$
     For vertices $a_{i_t + 1}, \ldots, a_n$ that are not incident with any edge, the unique action that such vertices undergo is from the matroid state operators that act on these corresponding qubits of the form 
     $$K_{a_{r}} = U_{a_{r}} = (\sigma_x)_{a_r}\otimes I^{\otimes (n-1)},$$ $r= i_t+1 , \ldots , n$. 
     By Proposition~\ref{Heinstabilizer}, the graph state $\ket{G}$ is the unique state stabilized by the subgroup $S$ generated by the operators  $\{ K_a \ | \ a \in V\}$. 

    Therefore, for each graph state $\ket{G}$ on $n$ qubits, there exist $t+1$ matroids $M_{a_{i_j}}$, $j= 1, \ldots, t$, and $M_{V_G}$, whose operators or the composite of operators (the latter is the case of the incident vertices $a_{i_1}, a_{i_2}, \ldots, a_{i_t}$, which correspond to the composite operators $K_{a_{i_j}}= U_{a_{i_j}}\circ U_{N_{a_{i_j}}}$, $j = 1, \ldots, t$) are independent commuting generators of the subgroup $S$ of $G_n$ that stabilize $\ket{G}$. The proof is complete.
\end{proof}

\section{Hypergraph, matroid, graph, and stabilizer states}
\label{sec:relations}

\begin{figure}
    \centering
    \includegraphics[width=\columnwidth]{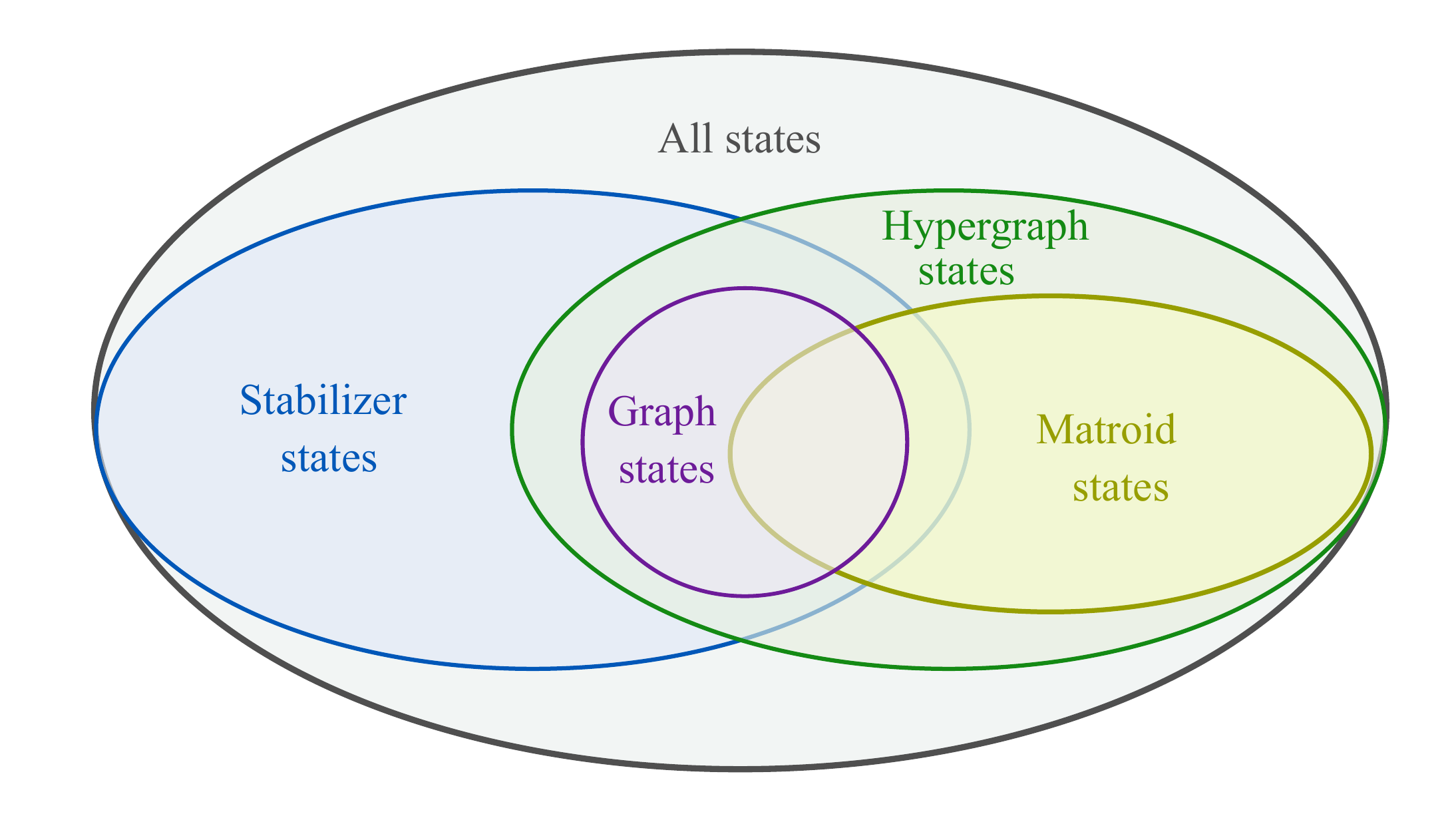}
    \caption{The relations among hypergraph states, matroid states, stabilizer states, and graph states.}
    \label{fig:classes1}
\end{figure}

From the previous sections, it is possible to conclude that there are relations between the classes of states defined by different frameworks, namely graph states, hypergraph states, stabilizer states, and matroid states. These relations are shown in Fig. \ref{fig:classes1}. Here, we are not considering local unitary transformations between the states.

In summary, every matroid state is a hypergraph state because it is always possible to associate an independent set or a circuit with a hyperedge, thereby recovering the hypergraph structure. However, the inclusion of matroid states in hypergraph states is strict because there are hypergraph states that cannot be represented as matroid states. On the other hand, this hierarchy does not work on the relation between graph states and matroid states, as discussed in \text{Remark} \ref{remarkgraph}, except for the specific case presented in \text{Corollary} \ref{graphmatro}. Finally, in \cite{Qu:2013a}, it was shown that graph states are a class of stabilizer states. Therefore, there is an intersection between the sets of matroid states and stabilizer states. Furthermore, given an arbitrary graph, we have shown a method to construct a finite collection of matroid states whose corresponding operators are the generators of the stabilizer subgroup of the underlying graph state in Proposition \ref{correspgraphmatro}.

\section{Discussion}
\label{sec:disc}

As is well known, interest in graph and hypergraph states began with a particular class of entanglement states: cluster states. Because of this, among other reasons, when the theory of graph states was formalized \cite{Ionicioiu:2012}, the authors attributed the $\CZ$ gate to each edge of the underlying graph, since such an operator produces the maximum entanglement state. After this, such a theory was generalized by hypergraphs in \cite{Qu:2013a}. Building on these starting points, extensive research has explored the underlying structures of graphs and hypergraphs to obtain maximum entanglement states. 

In this work, we have proposed a new approach to quantum states derived from matroids: matroid states, which are quantum states associated with a given matroid. An advantage of this construction is that the underlying matroid structure yields desirable properties of the corresponding matroid state. Unlike in the cases of graph and hypergraph states, here we do not assume that the operators must be the same, or even the $\CZ$ or $\CCZ{|E|-1}$ gate in particular. Of course, if we are interested in entangled states, we can attribute to each independent circuit or set an operator that provides this entanglement, i.e., the $\CZ$ or $\CCZ{|E|-1}$ gates, except in the case where the operator acts nontrivially only on one qubit. On the other hand, if we are interested in other quantum properties of the quantum state, the freedom of choice for operators associated with independent sets is essential. For example, we can consider magic as a quantum resource. If we are restricted to the $\CZ$ gate, it is only possible to generate states in the stabilizer framework and therefore without magic. In our case, we work in a more general scenario where the operators must be commutative among themselves, local, and symmetric in their inputs. Still, we do not assume that the operators are equal. Because a matroid has a structure that generalizes vector spaces, graphs, algebraic dependence, and so on, we utilize this structure to encode interesting properties about the corresponding matroid state.

\section{Final Remarks}
\label{sec:fr}

We have introduced matroid states, quantum states derived from matroids, whose quantum operators are assigned to circuits or independent sets. The inherent structure of a matroid offers some advantages over that of graphs and hypergraphs. For example, the set under which the operators are applied is completely determined by the basis; the matroid structure provides suitable properties concerning the corresponding matroid state; and we can obtain graph and hypergraph states from a set of matroid states. However, since we do not require the operators to be the same, our theory encompasses the theory of graph or hypergraph states. Although we consider that the operators can be distinct, since the final quantum state must be unique for each circuit or independent set, we fix a unique operator that acts nontrivially on the corresponding individual Hilbert spaces, such that it commutes with all others and is symmetric in its inputs.

\emph{Note:} During the completion stages of this work, we became aware of a related work  \cite{Huang:2026} proposing quantum states supported by matroids.

\section{Acknowledgements}
This work was partially supported by Coordenação de Aperfeiçoamento de Pessoal de Nível Superior (CAPES, Finance Code 001).
It was also supported by Conselho Nacional de Desenvolvimento Científico
Tecnológico (CNPq).
F.M.A. acknowledges financial support from Fundação Araucária Project No. 305 and CNPq Grant No. 313124/2023-0, and G.G.LaG. acknowledges financial support from CNPq Grant No. 302984/2022-4.\\

\noindent\textbf{Conflict of interest}\\
The authors declare no conflicts of interest.\\

\noindent\textbf{Data availability}\\
No new data were created or analyzed in this study. This is purely a theoretical work, and all mathematical derivations and physical conclusions are explicitly detailed within the manuscript.

\bibliography{bib.bib}

\begin{thebibliography}{24}%
\makeatletter
\providecommand \@ifxundefined [1]{%
 \@ifx{#1\undefined}
}%
\providecommand \@ifnum [1]{%
 \ifnum #1\expandafter \@firstoftwo
 \else \expandafter \@secondoftwo
 \fi
}%
\providecommand \@ifx [1]{%
 \ifx #1\expandafter \@firstoftwo
 \else \expandafter \@secondoftwo
 \fi
}%
\providecommand \natexlab [1]{#1}%
\providecommand \enquote  [1]{``#1''}%
\providecommand \bibnamefont  [1]{#1}%
\providecommand \bibfnamefont [1]{#1}%
\providecommand \citenamefont [1]{#1}%
\providecommand \href@noop [0]{\@secondoftwo}%
\providecommand \href [0]{\begingroup \@sanitize@url \@href}%
\providecommand \@href[1]{\@@startlink{#1}\@@href}%
\providecommand \@@href[1]{\endgroup#1\@@endlink}%
\providecommand \@sanitize@url [0]{\catcode `\\12\catcode `\$12\catcode
  `\&12\catcode `\#12\catcode `\^12\catcode `\_12\catcode `\%12\relax}%
\providecommand \@@startlink[1]{}%
\providecommand \@@endlink[0]{}%
\providecommand \url  [0]{\begingroup\@sanitize@url \@url }%
\providecommand \@url [1]{\endgroup\@href {#1}{\urlprefix }}%
\providecommand \urlprefix  [0]{URL }%
\providecommand \Eprint [0]{\href }%
\providecommand \doibase [0]{https://doi.org/}%
\providecommand \selectlanguage [0]{\@gobble}%
\providecommand \bibinfo  [0]{\@secondoftwo}%
\providecommand \bibfield  [0]{\@secondoftwo}%
\providecommand \translation [1]{[#1]}%
\providecommand \BibitemOpen [0]{}%
\providecommand \bibitemStop [0]{}%
\providecommand \bibitemNoStop [0]{.\EOS\space}%
\providecommand \EOS [0]{\spacefactor3000\relax}%
\providecommand \BibitemShut  [1]{\csname bibitem#1\endcsname}%
\let\auto@bib@innerbib\@empty
\bibitem [{\citenamefont {Marconi}\ \emph {et~al.}(2026)\citenamefont
  {Marconi}, \citenamefont {Müller-Rigat}, \citenamefont {Romero-Pallejà},
  \citenamefont {Tura},\ and\ \citenamefont {Sanpera}}]{Marconi:2026}%
  \BibitemOpen
  \bibfield  {author} {\bibinfo {author} {\bibfnamefont {C.}~\bibnamefont
  {Marconi}}, \bibinfo {author} {\bibfnamefont {G.}~\bibnamefont
  {Müller-Rigat}}, \bibinfo {author} {\bibfnamefont {J.}~\bibnamefont
  {Romero-Pallejà}}, \bibinfo {author} {\bibfnamefont {J.}~\bibnamefont
  {Tura}},\ and\ \bibinfo {author} {\bibfnamefont {A.}~\bibnamefont
  {Sanpera}},\ }\bibfield  {title} {\bibinfo {title} {Symmetric quantum states:
  a review of recent progress},\ }\href
  {https://doi.org/10.1088/1361-6633/ae440a} {\bibfield  {journal} {\bibinfo
  {journal} {Rep. Prog. Phys.}\ }\textbf {\bibinfo {volume} {89}},\ \bibinfo
  {pages} {024001} (\bibinfo {year} {2026})}\BibitemShut {NoStop}%
\bibitem [{\citenamefont {Nielsen}\ and\ \citenamefont
  {Chuang}(2010)}]{Nielsen:2010}%
  \BibitemOpen
  \bibfield  {author} {\bibinfo {author} {\bibfnamefont {M.~A.}\ \bibnamefont
  {Nielsen}}\ and\ \bibinfo {author} {\bibfnamefont {I.~L.}\ \bibnamefont
  {Chuang}},\ }\href@noop {} {\emph {\bibinfo {title} {Quantum Computation and
  Quantum Information}}}\ (\bibinfo  {publisher} {Cambridge University Press},\
  \bibinfo {year} {2010})\BibitemShut {NoStop}%
\bibitem [{\citenamefont {von Neumann}(1955)}]{vonNeumann:1955}%
  \BibitemOpen
  \bibfield  {author} {\bibinfo {author} {\bibfnamefont {J.}~\bibnamefont {von
  Neumann}},\ }\href@noop {} {\emph {\bibinfo {title} {Mathematical Foundations
  of Quantum Mechanics}}}\ (\bibinfo  {publisher} {Princeton University
  Press},\ \bibinfo {address} {Princeton, New Jersey},\ \bibinfo {year}
  {1955})\ \bibinfo {note} {english translation of the 1932 German edition
  \textit{Mathematische Grundlagen der Quantenmechanik}}\BibitemShut {NoStop}%
\bibitem [{\citenamefont {Hein}\ \emph {et~al.}(2004)\citenamefont {Hein},
  \citenamefont {Eisert},\ and\ \citenamefont {Briegel}}]{Hein:2004}%
  \BibitemOpen
  \bibfield  {author} {\bibinfo {author} {\bibfnamefont {M.}~\bibnamefont
  {Hein}}, \bibinfo {author} {\bibfnamefont {J.}~\bibnamefont {Eisert}},\ and\
  \bibinfo {author} {\bibfnamefont {H.~J.}\ \bibnamefont {Briegel}},\
  }\bibfield  {title} {\bibinfo {title} {Multiparty entanglement in graph
  states},\ }\href {https://doi.org/10.1103/PhysRevA.69.062311} {\bibfield
  {journal} {\bibinfo  {journal} {Phys. Rev. A}\ }\textbf {\bibinfo {volume}
  {69}},\ \bibinfo {pages} {062311} (\bibinfo {year} {2004})}\BibitemShut
  {NoStop}%
\bibitem [{\citenamefont {Aschauer}\ \emph {et~al.}(2005)\citenamefont
  {Aschauer}, \citenamefont {Dür},\ and\ \citenamefont
  {Briegel}}]{Aschauer:2005}%
  \BibitemOpen
  \bibfield  {author} {\bibinfo {author} {\bibfnamefont {H.}~\bibnamefont
  {Aschauer}}, \bibinfo {author} {\bibfnamefont {W.}~\bibnamefont {Dür}},\
  and\ \bibinfo {author} {\bibfnamefont {H.-J.}\ \bibnamefont {Briegel}},\
  }\bibfield  {title} {\bibinfo {title} {Multiparticle entanglement
  purification for two-colorable graph states},\ }\href
  {https://doi.org/10.1103/physreva.71.012319} {\bibfield  {journal} {\bibinfo
  {journal} {Phys. Rev. A}\ }\textbf {\bibinfo {volume} {71}},\ \bibinfo
  {pages} {012319} (\bibinfo {year} {2005})}\BibitemShut {NoStop}%
\bibitem [{\citenamefont {Raussendorf}\ and\ \citenamefont
  {Briegel}(2001)}]{Raussendorf:2001}%
  \BibitemOpen
  \bibfield  {author} {\bibinfo {author} {\bibfnamefont {R.}~\bibnamefont
  {Raussendorf}}\ and\ \bibinfo {author} {\bibfnamefont {H.~J.}\ \bibnamefont
  {Briegel}},\ }\bibfield  {title} {\bibinfo {title} {A one-way quantum
  computer},\ }\href {https://doi.org/10.1103/PhysRevLett.86.5188} {\bibfield
  {journal} {\bibinfo  {journal} {Phys. Rev. Lett.}\ }\textbf {\bibinfo
  {volume} {86}},\ \bibinfo {pages} {5188} (\bibinfo {year}
  {2001})}\BibitemShut {NoStop}%
\bibitem [{\citenamefont {Raussendorf}\ \emph {et~al.}(2003)\citenamefont
  {Raussendorf}, \citenamefont {Browne},\ and\ \citenamefont
  {Briegel}}]{Raussendorf:2003}%
  \BibitemOpen
  \bibfield  {author} {\bibinfo {author} {\bibfnamefont {R.}~\bibnamefont
  {Raussendorf}}, \bibinfo {author} {\bibfnamefont {D.~E.}\ \bibnamefont
  {Browne}},\ and\ \bibinfo {author} {\bibfnamefont {H.~J.}\ \bibnamefont
  {Briegel}},\ }\bibfield  {title} {\bibinfo {title} {Measurement-based quantum
  computation on cluster states},\ }\href
  {https://doi.org/10.1103/PhysRevA.68.022312} {\bibfield  {journal} {\bibinfo
  {journal} {Phys. Rev. A}\ }\textbf {\bibinfo {volume} {68}},\ \bibinfo
  {pages} {022312} (\bibinfo {year} {2003})}\BibitemShut {NoStop}%
\bibitem [{\citenamefont {Briegel}\ \emph {et~al.}(2009)\citenamefont
  {Briegel}, \citenamefont {Browne}, \citenamefont {D{\"u}r}, \citenamefont
  {Raussendorf},\ and\ \citenamefont {Van~den Nest}}]{Briegel:2009}%
  \BibitemOpen
  \bibfield  {author} {\bibinfo {author} {\bibfnamefont {H.~J.}\ \bibnamefont
  {Briegel}}, \bibinfo {author} {\bibfnamefont {D.~E.}\ \bibnamefont {Browne}},
  \bibinfo {author} {\bibfnamefont {W.}~\bibnamefont {D{\"u}r}}, \bibinfo
  {author} {\bibfnamefont {R.}~\bibnamefont {Raussendorf}},\ and\ \bibinfo
  {author} {\bibfnamefont {M.}~\bibnamefont {Van~den Nest}},\ }\bibfield
  {title} {\bibinfo {title} {Measurement-based quantum computation},\ }\href
  {https://doi.org/10.1038/nphys1157} {\bibfield  {journal} {\bibinfo
  {journal} {Nat. Phys.}\ }\textbf {\bibinfo {volume} {5}},\ \bibinfo {pages}
  {19} (\bibinfo {year} {2009})}\BibitemShut {NoStop}%
\bibitem [{\citenamefont {Ionicioiu}\ and\ \citenamefont
  {Spiller}(2012)}]{Ionicioiu:2012}%
  \BibitemOpen
  \bibfield  {author} {\bibinfo {author} {\bibfnamefont {R.}~\bibnamefont
  {Ionicioiu}}\ and\ \bibinfo {author} {\bibfnamefont {T.~P.}\ \bibnamefont
  {Spiller}},\ }\bibfield  {title} {\bibinfo {title} {Encoding graphs into
  quantum states: An axiomatic approach},\ }\href
  {https://doi.org/10.1103/PhysRevA.85.062313} {\bibfield  {journal} {\bibinfo
  {journal} {Phys. Rev. A}\ }\textbf {\bibinfo {volume} {85}},\ \bibinfo
  {pages} {062313} (\bibinfo {year} {2012})}\BibitemShut {NoStop}%
\bibitem [{\citenamefont {Qu}\ \emph {et~al.}(2013{\natexlab{a}})\citenamefont
  {Qu}, \citenamefont {Wang}, \citenamefont {Li},\ and\ \citenamefont
  {Bao}}]{Qu:2013a}%
  \BibitemOpen
  \bibfield  {author} {\bibinfo {author} {\bibfnamefont {R.}~\bibnamefont
  {Qu}}, \bibinfo {author} {\bibfnamefont {J.}~\bibnamefont {Wang}}, \bibinfo
  {author} {\bibfnamefont {Z.-S.}\ \bibnamefont {Li}},\ and\ \bibinfo {author}
  {\bibfnamefont {Y.-R.}\ \bibnamefont {Bao}},\ }\bibfield  {title} {\bibinfo
  {title} {Encoding hypergraphs into quantum states},\ }\href
  {https://doi.org/10.1103/PhysRevA.87.022311} {\bibfield  {journal} {\bibinfo
  {journal} {Phys. Rev. A}\ }\textbf {\bibinfo {volume} {87}},\ \bibinfo
  {pages} {022311} (\bibinfo {year} {2013}{\natexlab{a}})}\BibitemShut
  {NoStop}%
\bibitem [{\citenamefont {Hein}\ \emph {et~al.}(2006)\citenamefont {Hein},
  \citenamefont {D{\"u}r}, \citenamefont {Eisert}, \citenamefont {Raussendorf},
  \citenamefont {Van~den Nest},\ and\ \citenamefont {Briegel}}]{Hein:2006}%
  \BibitemOpen
  \bibfield  {author} {\bibinfo {author} {\bibfnamefont {M.}~\bibnamefont
  {Hein}}, \bibinfo {author} {\bibfnamefont {W.}~\bibnamefont {D{\"u}r}},
  \bibinfo {author} {\bibfnamefont {J.}~\bibnamefont {Eisert}}, \bibinfo
  {author} {\bibfnamefont {R.}~\bibnamefont {Raussendorf}}, \bibinfo {author}
  {\bibfnamefont {M.}~\bibnamefont {Van~den Nest}},\ and\ \bibinfo {author}
  {\bibfnamefont {H.~J.}\ \bibnamefont {Briegel}},\ }\bibfield  {title}
  {\bibinfo {title} {Entanglement in graph states and its applications},\
  }\href {https://doi.org/10.3254/978-1-61499-018-5-115} {\bibfield  {journal}
  {\bibinfo  {journal} {Proceedings of the International School of Physics
  "Enrico Fermi"}\ }\textbf {\bibinfo {volume} {162}},\ \bibinfo {pages} {115}
  (\bibinfo {year} {2006})}\BibitemShut {NoStop}%
\bibitem [{\citenamefont {Qu}\ \emph {et~al.}(2013{\natexlab{b}})\citenamefont
  {Qu}, \citenamefont {Li}, \citenamefont {Wang},\ and\ \citenamefont
  {Bao}}]{Qu:2013b}%
  \BibitemOpen
  \bibfield  {author} {\bibinfo {author} {\bibfnamefont {R.}~\bibnamefont
  {Qu}}, \bibinfo {author} {\bibfnamefont {Z.-s.}\ \bibnamefont {Li}}, \bibinfo
  {author} {\bibfnamefont {J.}~\bibnamefont {Wang}},\ and\ \bibinfo {author}
  {\bibfnamefont {Y.-r.}\ \bibnamefont {Bao}},\ }\bibfield  {title} {\bibinfo
  {title} {Multipartite entanglement and hypergraph states of three qubits},\
  }\href {https://doi.org/10.1103/physreva.87.032329} {\bibfield  {journal}
  {\bibinfo  {journal} {Phys. Rev. A}\ }\textbf {\bibinfo {volume} {87}},\
  \bibinfo {pages} {032329} (\bibinfo {year} {2013}{\natexlab{b}})}\BibitemShut
  {NoStop}%
\bibitem [{\citenamefont {Qu}\ \emph {et~al.}(2013{\natexlab{c}})\citenamefont
  {Qu}, \citenamefont {Ma}, \citenamefont {Wang},\ and\ \citenamefont
  {Bao}}]{Qu:2013c}%
  \BibitemOpen
  \bibfield  {author} {\bibinfo {author} {\bibfnamefont {R.}~\bibnamefont
  {Qu}}, \bibinfo {author} {\bibfnamefont {Y.-p.}\ \bibnamefont {Ma}}, \bibinfo
  {author} {\bibfnamefont {B.}~\bibnamefont {Wang}},\ and\ \bibinfo {author}
  {\bibfnamefont {Y.-r.}\ \bibnamefont {Bao}},\ }\bibfield  {title} {\bibinfo
  {title} {Relationship among locally maximally entangleable states,wstates,
  and hypergraph states under local unitary transformations},\ }\href
  {https://doi.org/10.1103/physreva.87.052331} {\bibfield  {journal} {\bibinfo
  {journal} {Phys. Rev. A}\ }\textbf {\bibinfo {volume} {87}},\ \bibinfo
  {pages} {052331} (\bibinfo {year} {2013}{\natexlab{c}})}\BibitemShut
  {NoStop}%
\bibitem [{\citenamefont {Rossi}\ \emph {et~al.}(2013)\citenamefont {Rossi},
  \citenamefont {Huber}, \citenamefont {Bruß},\ and\ \citenamefont
  {Macchiavello}}]{Rossi:2013}%
  \BibitemOpen
  \bibfield  {author} {\bibinfo {author} {\bibfnamefont {M.}~\bibnamefont
  {Rossi}}, \bibinfo {author} {\bibfnamefont {M.}~\bibnamefont {Huber}},
  \bibinfo {author} {\bibfnamefont {D.}~\bibnamefont {Bruß}},\ and\ \bibinfo
  {author} {\bibfnamefont {C.}~\bibnamefont {Macchiavello}},\ }\bibfield
  {title} {\bibinfo {title} {Quantum hypergraph states},\ }\href
  {https://doi.org/10.1088/1367-2630/15/11/113022} {\bibfield  {journal}
  {\bibinfo  {journal} {New J. Phys.}\ }\textbf {\bibinfo {volume} {15}},\
  \bibinfo {pages} {113022} (\bibinfo {year} {2013})}\BibitemShut {NoStop}%
\bibitem [{\citenamefont {G{\"u}hne}\ \emph {et~al.}(2014)\citenamefont
  {G{\"u}hne}, \citenamefont {Cuquet}, \citenamefont {Steinhoff}, \citenamefont
  {Moroder}, \citenamefont {Rossi}, \citenamefont {Bru{\ss}}, \citenamefont
  {Kraus},\ and\ \citenamefont {Macchiavello}}]{Guhne:2014}%
  \BibitemOpen
  \bibfield  {author} {\bibinfo {author} {\bibfnamefont {O.}~\bibnamefont
  {G{\"u}hne}}, \bibinfo {author} {\bibfnamefont {M.}~\bibnamefont {Cuquet}},
  \bibinfo {author} {\bibfnamefont {F.~E.~S.}\ \bibnamefont {Steinhoff}},
  \bibinfo {author} {\bibfnamefont {T.}~\bibnamefont {Moroder}}, \bibinfo
  {author} {\bibfnamefont {M.}~\bibnamefont {Rossi}}, \bibinfo {author}
  {\bibfnamefont {D.}~\bibnamefont {Bru{\ss}}}, \bibinfo {author}
  {\bibfnamefont {B.}~\bibnamefont {Kraus}},\ and\ \bibinfo {author}
  {\bibfnamefont {C.}~\bibnamefont {Macchiavello}},\ }\bibfield  {title}
  {\bibinfo {title} {Entanglement and nonclassical properties of hypergraph
  states},\ }\href {https://doi.org/10.1088/1751-8113/47/33/335303} {\bibfield
  {journal} {\bibinfo  {journal} {J. Phys. A}\ }\textbf {\bibinfo {volume}
  {47}},\ \bibinfo {pages} {335303} (\bibinfo {year} {2014})}\BibitemShut
  {NoStop}%
\bibitem [{\citenamefont {Poderini}\ \emph {et~al.}(2026)\citenamefont
  {Poderini}, \citenamefont {Bruß},\ and\ \citenamefont
  {Macchiavello}}]{Poderini:2026}%
  \BibitemOpen
  \bibfield  {author} {\bibinfo {author} {\bibfnamefont {D.}~\bibnamefont
  {Poderini}}, \bibinfo {author} {\bibfnamefont {D.}~\bibnamefont {Bruß}},\
  and\ \bibinfo {author} {\bibfnamefont {C.}~\bibnamefont {Macchiavello}},\
  }\bibfield  {title} {\bibinfo {title} {Quantum hypergraph states: a review},\
  }\href {https://doi.org/10.1088/1361-6633/ae7735} {\bibfield  {journal}
  {\bibinfo  {journal} {Rep. Prog. Phys.}\ }\textbf {\bibinfo {volume} {89}},\
  \bibinfo {pages} {066001} (\bibinfo {year} {2026})}\BibitemShut {NoStop}%
\bibitem [{\citenamefont {Oxley}(1992)}]{Oxley:1992}%
  \BibitemOpen
  \bibfield  {author} {\bibinfo {author} {\bibfnamefont {J.~G.}\ \bibnamefont
  {Oxley}},\ }\href@noop {} {\emph {\bibinfo {title} {Matroid Theory}}}\
  (\bibinfo  {publisher} {Oxford University Press},\ \bibinfo {address} {New
  York},\ \bibinfo {year} {1992})\BibitemShut {NoStop}%
\bibitem [{\citenamefont {Whitney}(1935)}]{Whitney:1935}%
  \BibitemOpen
  \bibfield  {author} {\bibinfo {author} {\bibfnamefont {H.}~\bibnamefont
  {Whitney}},\ }\bibfield  {title} {\bibinfo {title} {On the abstract
  properties of linear dependence},\ }\href {https://doi.org/10.2307/2371182}
  {\bibfield  {journal} {\bibinfo  {journal} {Am. J. Math.}\ }\textbf {\bibinfo
  {volume} {57}},\ \bibinfo {pages} {509} (\bibinfo {year} {1935})}\BibitemShut
  {NoStop}%
\bibitem [{\citenamefont {Liu}\ and\ \citenamefont {Winter}(2022)}]{Liu:2022}%
  \BibitemOpen
  \bibfield  {author} {\bibinfo {author} {\bibfnamefont {Z.-W.}\ \bibnamefont
  {Liu}}\ and\ \bibinfo {author} {\bibfnamefont {A.}~\bibnamefont {Winter}},\
  }\bibfield  {title} {\bibinfo {title} {Many-body quantum magic},\ }\href
  {https://doi.org/10.1103/prxquantum.3.020333} {\bibfield  {journal} {\bibinfo
   {journal} {PRX Quantum}\ }\textbf {\bibinfo {volume} {3}},\ \bibinfo {pages}
  {020333} (\bibinfo {year} {2022})}\BibitemShut {NoStop}%
\bibitem [{\citenamefont {Gottesman}(1996)}]{Gottesman:1996}%
  \BibitemOpen
  \bibfield  {author} {\bibinfo {author} {\bibfnamefont {D.}~\bibnamefont
  {Gottesman}},\ }\bibfield  {title} {\bibinfo {title} {Class of quantum
  error-correcting codes saturating the quantum hamming bound},\ }\href
  {https://doi.org/10.1103/PhysRevA.54.1862} {\bibfield  {journal} {\bibinfo
  {journal} {Phys. Rev. A}\ }\textbf {\bibinfo {volume} {54}},\ \bibinfo
  {pages} {1862} (\bibinfo {year} {1996})}\BibitemShut {NoStop}%
\bibitem [{\citenamefont {Gottesman}(1998)}]{Gottesman:1998}%
  \BibitemOpen
  \bibfield  {author} {\bibinfo {author} {\bibfnamefont {D.}~\bibnamefont
  {Gottesman}},\ }\bibfield  {title} {\bibinfo {title} {Theory of
  fault-tolerant quantum computation},\ }\href
  {https://doi.org/10.1103/PhysRevA.57.127} {\bibfield  {journal} {\bibinfo
  {journal} {Phys. Rev. A}\ }\textbf {\bibinfo {volume} {57}},\ \bibinfo
  {pages} {127} (\bibinfo {year} {1998})}\BibitemShut {NoStop}%
\bibitem [{\citenamefont {Calderbank}\ \emph {et~al.}(1998)\citenamefont
  {Calderbank}, \citenamefont {Rains}, \citenamefont {Shor},\ and\
  \citenamefont {Sloane}}]{Calderbank:1998}%
  \BibitemOpen
  \bibfield  {author} {\bibinfo {author} {\bibfnamefont {A.~R.}\ \bibnamefont
  {Calderbank}}, \bibinfo {author} {\bibfnamefont {E.~M.}\ \bibnamefont
  {Rains}}, \bibinfo {author} {\bibfnamefont {P.~W.}\ \bibnamefont {Shor}},\
  and\ \bibinfo {author} {\bibfnamefont {N.~J.~A.}\ \bibnamefont {Sloane}},\
  }\bibfield  {title} {\bibinfo {title} {Quantum error correction via codes
  over gf(4)},\ }\href {https://doi.org/10.1109/18.681315} {\bibfield
  {journal} {\bibinfo  {journal} {IEEE Trans. Inf. Theory}\ }\textbf {\bibinfo
  {volume} {44}},\ \bibinfo {pages} {1369} (\bibinfo {year}
  {1998})}\BibitemShut {NoStop}%
\bibitem [{\citenamefont {Ketkar}\ \emph {et~al.}(2006)\citenamefont {Ketkar},
  \citenamefont {Klappenecker}, \citenamefont {Kumar},\ and\ \citenamefont
  {Sarvepalli}}]{Ketkar:2006}%
  \BibitemOpen
  \bibfield  {author} {\bibinfo {author} {\bibfnamefont {A.}~\bibnamefont
  {Ketkar}}, \bibinfo {author} {\bibfnamefont {A.}~\bibnamefont
  {Klappenecker}}, \bibinfo {author} {\bibfnamefont {S.}~\bibnamefont
  {Kumar}},\ and\ \bibinfo {author} {\bibfnamefont {P.~K.}\ \bibnamefont
  {Sarvepalli}},\ }\bibfield  {title} {\bibinfo {title} {Nonbinary stabilizer
  codes over finite fields},\ }\href {https://doi.org/10.1109/TIT.2006.883612}
  {\bibfield  {journal} {\bibinfo  {journal} {IEEE Trans. Inf. Theory}\
  }\textbf {\bibinfo {volume} {52}},\ \bibinfo {pages} {4892} (\bibinfo {year}
  {2006})}\BibitemShut {NoStop}%
\bibitem [{\citenamefont {Huang}\ \emph {et~al.}(2026)\citenamefont {Huang},
  \citenamefont {Shi}, \citenamefont {Zhang},\ and\ \citenamefont
  {Li}}]{Huang:2026}%
  \BibitemOpen
  \bibfield  {author} {\bibinfo {author} {\bibfnamefont {X.}~\bibnamefont
  {Huang}}, \bibinfo {author} {\bibfnamefont {F.}~\bibnamefont {Shi}}, \bibinfo
  {author} {\bibfnamefont {L.}~\bibnamefont {Zhang}},\ and\ \bibinfo {author}
  {\bibfnamefont {L.}~\bibnamefont {Li}},\ }\bibfield  {title} {\bibinfo
  {title} {Quantum states supported by matroids},\ }\href
  {https://doi.org/10.1103/mx1w-k8fg} {\bibfield  {journal} {\bibinfo
  {journal} {Phys. Rev. A}\ }\textbf {\bibinfo {volume} {113}},\ \bibinfo
  {pages} {062452} (\bibinfo {year} {2026})}\BibitemShut {NoStop}%
\end{thebibliography}%

\end{document}